\newtheorem{theorem}{Theorem}[section]
\newtheorem{proposition}[theorem]{Proposition} 
\newtheorem{remark}[theorem]{Remark}
\newtheorem{assumption}{Assumption} 
\newtheorem{definition}{Definition}
\newcommand{\beq}{\begin{equation}} 
\newcommand{\eeq}{\end{equation}} 
\newcommand{\beqa}{\begin{eqnarray}} 
\newcommand{\eeqa}{\end{eqnarray}} 
\newcommand{\beqas}{\begin{eqnarray*}} 
\newcommand{\eeqas}{\end{eqnarray*}} 
\newcommand{\ba}{\begin{array}} 
\newcommand{\ea}{\end{array}} 
\newcommand{\bi}{\begin{itemize}} 
\newcommand{\ei}{\end{itemize}} 
\newcommand{\gap}{\hspace*{2em}} 
\newcommand{\nn}{\nonumber}
\def\eqnok#1{(\ref{#1})} 
\def\argmin{{\rm argmin}}
\def\vgap{\vspace*{.1in}}
\def\QED{\ifhmode\unskip\nobreak\fi\ifmmode\ifinner\else\hskip5pt\fi\fi
  \hbox{\hskip5pt\vrule width5pt height5pt depth1.5pt\hskip1pt}} 
\def\amin{{\alpha_{\min}}}
\def\amax{{\alpha_{\max}}}
\def\arho{{\alpha_{\rho}}}
\def\bbeta{{\bar \beta}}
\def\brho{{\beta_{\rho}}}
\def\cO{{\cal O}}
\def\cS{{\cal S}} 
\def\cX{{\cal X}}
\def\cU{{\cal U}} 
\def\cZ{{\cal Z}}
\def\Diag{{\rm Diag}} 
\def\eps{{\epsilon}}
\def\epso{{\epsilon_o}}
\def\hbeta{{\beta}}
\def\lmin{{\lambda_{\min}}}
\def\lmax{{\lambda_{\max}}}
\def\setX{{\cal X}}
\def\tX{{\tilde X}}
\def\tr{{\rm Tr}}
\def\xhbuk{{X_{\hbeta}(U_k)}}
\title{Adaptive First-Order Methods for General \\ Sparse Inverse Covariance 
Selection
} 
\author{Zhaosong Lu%
\thanks{
Department of Mathematics, Simon Fraser University, Burnaby, BC, 
V5A 1S6, Canada. (email: {\tt zhaosong@sfu.ca}).
This author was supported in part by SFU President's Research Grant
and NSERC Discovery Grant.}
}
\date{December 2, 2008 }
\begin{document}

\maketitle

\begin{abstract}
 
In this paper, we consider estimating sparse inverse covariance of a Gaussian 
graphical model whose conditional independence is assumed to be {\it partially} 
known. Similarly as in \cite{DaOnEl06}, we formulate it as an $l_1$-norm 
penalized maximum likelihood estimation problem. Further, we propose an algorithm 
framework, and develop two first-order methods, that is, the adaptive spectral 
projected gradient (ASPG) method and the adaptive Nesterov's smooth (ANS) method, for 
solving this estimation problem. Finally, we compare the performance of these two 
methods on a set of randomly generated instances. Our computational results demonstrate 
that both methods are able to solve problems of size at least a thousand and 
number of constraints of nearly a half million within a reasonable amount of 
time, and the ASPG method generally outperforms the ANS method.

\vskip14pt

\noindent {\bf Key words:} Sparse inverse covariance selection, adaptive spectral 
projected gradient method, adaptive Nesterov's smooth method
 
\vskip14pt

\noindent
{\bf AMS 2000 subject classification:}
90C22, 90C25, 90C47, 65K05, 62J10

\end{abstract}

\section{Introduction} \label{introduction}

It is well-known that sparse undirected graphical models are capable of describing 
and explaining the relationships among a set of variables. Given a set of random 
variables with Gaussian distribution, the estimation of such models involves finding 
the pattern of zeros in the inverse covariance matrix since these zeros correspond 
to conditional independencies among the variables. In recent years, a variety of 
approaches have been proposed for estimating sparse inverse covariance matrix. 
(All notations used below are defined in Subsection \ref{notation}.) 
Given a sample covariance matrix $\Sigma \in \cS^n_+$, d'Aspremont 
et al.\ \cite{DaOnEl06} formulated sparse inverse covariance selection as the 
following $l_1$-norm penalized maximum likelihood estimation problem:
\beq \label{relax1}
\max\limits_X\  \{\log\det X - \langle \Sigma, X \rangle - \rho e^T |X| e: \ 
X \succeq 0\},
\eeq
where $\rho>0$ is a parameter controlling the trade-off between likelihood and 
sparsity of the solution. They also studied Nesterov's smooth approximation 
scheme \cite{Nest05-1} and block-coordinate descent (BCD) method for solving 
\eqnok{relax1}. Independently, Yuan and Lin \cite{YuLi07-1} proposed a similar 
estimation problem to \eqnok{relax1} as follows:
\beq \label{relax2}
\max\limits_X \ \{\log\det X - \langle \Sigma, X \rangle - \rho 
\sum\limits_{i\neq j} |X_{ij}|: \  X \succeq 0\}.
\eeq
They showed that problem \eqnok{relax2} can be suitably solved by the interior 
point algorithm developed in Vandenberghe et al.\ \cite{VaBoWu98}. As demonstrated 
in \cite{DaOnEl06,YuLi07-1}, the estimation problems \eqnok{relax1} and \eqnok{relax2} 
are capable of discovering effectively the sparse structure, or equivalently, the 
conditional independence in the underlying graphical model. Recently, 
Lu \cite{Lu07} proposed a variant of Nesterov's smooth method \cite{Nest05-1} 
for problems \eqnok{relax1} and \eqnok{relax2} that substantially outperforms 
the existing methods in the literature. In addition, Dahl et al.\ \cite{DaVaRo06} 
studied the maximum likelihood estimation of a Gaussian graphical model whose 
conditional independence is known, which can be formulated as   
\beq \label{max-likehood}
\max\limits_X \ \{\log\det X - \langle \Sigma, X \rangle: \ X \succeq 0, \ X_{ij}=0, 
\forall (i,j) \in \bar E \},
\eeq  
where $\bar E$ is a collection of all pairs of conditional independent nodes. 
They showed that when the underlying graph is nearly-chordal, Newton's method 
and preconditioned conjugate gradient method can be efficiently applied to solve 
\eqnok{max-likehood}.

In practice, the sparsity structure of a Gaussian graphical model is often partially
known from some knowledge of its random variables. In this paper we consider estimating 
sparse inverse covariance of a Gaussian graphical model whose conditional independence 
is assumed to be {\it partially} known in advance (but it can be completely unknown). 
Given a sample covariance matrix $\Sigma \in \cS^n_+$, we can naturally formulate it 
as the following constrained $l_1$-norm penalized maximum likelihood estimation problem:
\beq \label{gsparcov}
\ba{rl}
\max\limits_X & \log\det X - \langle \Sigma, X \rangle - \sum\limits_{(i,j)\notin\Omega} 
\rho_{ij} |X_{ij}|, \\ [10pt]
\mbox{s.t.} &  X \succeq 0, \ X_{ij}=0, \forall (i,j) \in \Omega,
\ea 
\eeq   
where $\Omega$ consists of a set of pairs of conditionally independent nodes, and $\{\rho_{ij}\}_{(i,j) 
\notin \Omega}$ is a set of nonnegative parameters controlling the trade-off 
between likelihood and sparsity of the solution. It is worth mentioning that unlike in 
\cite{DaVaRo06}, we do not assume any specific structure on the sparsity of underlying 
graph for problem \eqnok{gsparcov}. We can clearly observe that (i) $(i,i) \notin \Omega$ 
for $1 \le i \le n$, and $(i,j) \in \Omega$ if and only if $(j,i) \in \Omega$; (ii) 
$\rho_{ij}= \rho_{ji}$ for any $(i,j)\notin \Omega$; and (iii) problems 
\eqnok{relax1}-\eqnok{max-likehood} can be viewed as special cases of problem 
\eqnok{gsparcov} by choosing appropriate $\Omega$ and $\{\rho_{ij}\}_{(i,j) \notin \Omega}$. 
For example, if setting $\Omega=\emptyset$ and $\rho_{ij}=\rho$ for all $(i,j)$, problem 
\eqnok{gsparcov} becomes \eqnok{relax1}.

It is easy to observe that problem \eqnok{gsparcov} can be reformulated as a constrained 
smooth convex problem that has an explicit $\cO(n^2)$-logarithmically homogeneous 
self-concordant barrier function. Thus, it can be suitably solved by interior point (IP) 
methods (see Nesterov and Nemirovski \cite{NeNe94} and Vandenberghe et al.\ \cite{VaBoWu98}). 
The worst-case iteration complexity of IP methods for finding an $\epsilon$-optimal solution 
to \eqnok{gsparcov} is $\cO(n\log(\epsilon_0/\epsilon))$, where $\epsilon_0$ is an initial 
gap. Each iterate of IP methods requires $\cO(n^6)$ arithmetic cost for assembling and 
solving a typically dense Newton system with $\cO(n^2)$ variables. Thus, the total worst-case 
arithmetic cost of IP methods for finding an $\epsilon$-optimal solution to \eqnok{gsparcov} 
is $\cO(n^7\log(\epsilon_0/\epsilon))$, which is prohibitive when $n$ is relatively large.
  
Recently, Friedman et al.\ \cite{FriHasTib07} proposed a gradient type method for solving 
problem \eqnok{gsparcov}. They first converted \eqnok{gsparcov} into the following 
penalization problem 
\beq \label{penalty-prob1}
\max\limits_{X \succeq 0} \ \log\det X - \langle \Sigma, X \rangle - 
\sum\limits_{i,j} \rho_{ij} |X_{ij}|.   
\eeq
by setting $\rho_{ij}$ to an extraordinary large number (say, $10^9$) for all $(i,j)\in\Omega$. 
Then they applied a slight variant of the BCD method \cite{DaOnEl06} to the dual problem of 
\eqnok{penalty-prob1} in which each iteration is solved by a coordinate descent approach to 
a lasso ($l_1$-regularized) least-squares problem. Given that their method is a gradient type 
method and the dual problem of \eqnok{penalty-prob1} is highly ill-conditioned for the 
above choice of $\rho$, it is not surprising that their method converges extremely slowly. 
Moreover, since the associated lasso least-squares problems can only be solved inexactly, 
their method often fails to converge even for a small problem. 

In this paper, we propose adaptive first-order methods for problem \eqnok{gsparcov}. 
Instead of solving \eqnok{penalty-prob1} once with a set of huge penalty parameters 
$\{\rho_{ij}\}_{(i,j)\in \Omega}$, our methods consist of solving a sequence of problems 
\eqnok{penalty-prob1} with a set of moderate penalty parameters $\{\rho_{ij}\}_{(i,j)\in 
\Omega}$ that are adaptively adjusted until a desired approximate solution is found. 
For a given $\rho$,  problem \eqnok{penalty-prob1} is solved by the adaptive spectral 
projected gradient (ASPG) method and the adaptive Nesterov's smooth (ANS) method that 
are proposed in this paper. 
                   
The rest of paper is organized as follows. In Subsection \ref{notation}, we introduce 
the notations used in this paper. In Section \ref{methods}, we propose an algorithm 
framework and develop two first-order methods, that is, the ASPG and ANS 
methods, for solving problem \eqnok{gsparcov}. The performance of these two methods 
are compared on a set of randomly generated instances in Section \ref{comp}. Finally, 
we present some concluding remarks in Section \ref{concl-remark}. 

\subsection{Notation} \label{notation}

In this paper, all vector spaces are assumed to be finite dimensional. 
The symbols $\Re^n$, $\Re^n_+$ and $\Re^n_{++}$ denote the $n$-dimensional Euclidean 
space, the nonnegative orthant of $\Re^n$ and the positive orthant of $\Re^n$, 
respectively. The set of all $m \times n$ matrices with real entries is denoted by 
$\Re^{m \times n}$. The space of symmetric $n \times n$ matrices will be denoted 
by $\cS^n$. If $X \in \cS^n$ is positive semidefinite, we write $X \succeq 0$. Also, 
we write $X \preceq Y$ to mean $Y-X \succeq 0$. The cone of positive semidefinite 
(resp., definite) matrices is denoted by $\cS^n_+$ (resp., $\cS^n_{++}$). Given
matrices $X$ and $Y$ in $\Re^{m \times n}$, the standard inner product is 
defined by $\langle X, Y \rangle := \tr (XY^T)$, where $\tr(\cdot)$ denotes 
the trace of a matrix. $\|\cdot\|$ denotes the Euclidean norm and its associated 
operator norm unless it is explicitly stated otherwise. The Frobenius norm of a 
real matrix $X$ is defined as $\|X\|_F := \sqrt{\tr(XX^T)}$. We denote by $e$ 
the vector of all ones, and by $I$ the identity matrix. Their dimensions should 
be clear from the context. For a real matrix $X$, we denote by $|X|$ the absolute 
value of $X$, that is, $|X|_{ij}=|X_{ij}|$ for all $i, j$. The determinant and the 
minimal (resp., maximal) eigenvalue of a real symmetric matrix $X$ are denoted by 
$\det X$ and $\lmin(X)$ (resp., $\lmax(X)$), respectively, and $\lambda_i(X)$ 
denotes its $i$th largest eigenvalue. Given an $n \times n$ (partial) matrix 
$\rho$, $\Diag(\rho)$ denotes the diagonal matrix whose $i$th diagonal element 
is $\rho_{ii}$ for $i=1,\ldots,n$. Given matrices $X$ and $Y$ in $\Re^{m \times 
n}$, $X \ast Y$ denotes the pointwise product of $X$ and $Y$, namely, $X \ast Y 
\in \Re^{m \times n}$ whose $ij$th entry is $X_{ij}Y_{ij}$ for all $i, j$. We 
denote by $\cZ_{+}$ the set of all nonnegative integers.      

\section{Adaptive first-order methods}
\label{methods} 

In this section, we discuss some suitable first-order methods for general sparse 
inverse covariance selection problem \eqnok{gsparcov}. In particular, we first 
provide an algorithm framework for it in Subsection \ref{framework}. Then we 
specialize this framework by considering two first-order methods, namely, the 
adaptive spectral projected gradient method and the adaptive Nesterov's smooth 
method in Subsection \ref{special}.

\subsection{Algorithm framework}
\label{framework}
 
In this subsection, we provide an algorithm framework for general sparse inverse 
covariance selection problem \eqnok{gsparcov}.

Throughout this paper, we assume that $\rho_{ij} \ge 0$ is given and fixed for all  
$(i,j)\notin \Omega$, and that the following condition holds.

\begin{assumption} \label{assump}
$\Sigma + \Diag(\rho) \succ 0$.
\end{assumption}

Note that $\Sigma$ is a sample covariance matrix, and hence $\Sigma \succeq 0$. In 
addition, $\Diag(\rho) \succeq 0$. Thus, $\Sigma + \Diag(\rho) \succeq 0$. It may not 
be, however, positive definite in general. But we can always perturb $\rho_{ii}$ by 
adding a small positive number (say, $10^{-8}$) whenever needed to ensure the above 
assumption holds.

We first establish the existence of an optimal solution for problem \eqnok{gsparcov} 
as follows.

\begin{proposition} \label{gsparcov-soln}
Problem \eqnok{gsparcov} has a unique optimal solution $X^* \in \cS^n_{++}$. 
\end{proposition}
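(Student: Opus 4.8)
The plan is to get existence from a coercivity (compactness) argument that leans on Assumption~\ref{assump}, and uniqueness from strict concavity of $\log\det$ on $\cS^n_{++}$. Write $f(X):=\log\det X-\langle\Sigma,X\rangle-\sum_{(i,j)\notin\Omega}\rho_{ij}|X_{ij}|$ for the objective, with the convention $f(X)=-\infty$ when $X\succeq 0$ is singular, and let $\mathcal{F}:=\{X\succeq 0:\ X_{ij}=0\ \forall (i,j)\in\Omega\}$ be the feasible set, which is closed and convex. First I would note that $\mathcal F\cap\cS^n_{++}\neq\emptyset$: the identity $I$ is diagonal, and since $(i,j)\in\Omega$ implies $i\neq j$, $I$ satisfies all the constraints, with finite value $f(I)=-\tr\Sigma$.

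For the key estimate, observe that for any $X\in\mathcal F$ we have $|X_{ii}|=X_{ii}$ and $(i,i)\notin\Omega$, so $\sum_{(i,j)\notin\Omega}\rho_{ij}|X_{ij}|\ge\sum_i\rho_{ii}X_{ii}=\langle\Diag(\rho),X\rangle$; hence $f(X)\le\log\det X-\langle\Sigma+\Diag(\rho),X\rangle$. By Assumption~\ref{assump}, $\mu:=\lmin(\Sigma+\Diag(\rho))>0$, so $\langle\Sigma+\Diag(\rho),X\rangle\ge\mu\,\tr X$ for $X\succeq 0$, and denoting the eigenvalues of $X$ by $\lambda_1,\dots,\lambda_n\ge 0$,
\[
f(X)\ \le\ \sum_{i=1}^n\bigl(\log\lambda_i-\mu\lambda_i\bigr).
\]
Since $t\mapsto\log t-\mu t$ is bounded above on $(0,\infty)$ and tends to $-\infty$ as $t\to 0^+$ and as $t\to+\infty$, it follows that $f$ is bounded above on $\mathcal F$ and that the superlevel set $L:=\{X\in\mathcal F:\ f(X)\ge -\tr\Sigma\}$ forces each $\lambda_i(X)$ into a fixed compact interval $[a,b]\subset(0,\infty)$. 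Thus $L$ is bounded, $L\subset\cS^n_{++}$, and $L$ is closed because $f$ is upper semicontinuous on $\cS^n$ with the above convention; also $L\neq\emptyset$ since $I\in L$. Therefore $L$ is a nonempty compact subset of $\cS^n_{++}$ on which $f$ is continuous, so $f$ attains its maximum on $L$, and this maximum is the global maximum of \eqnok{gsparcov}; any maximizer lies in $\cS^n_{++}$ because $\log\det X=-\infty$ on the boundary of $\cS^n_+$ while $f(I)$ is finite.

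For uniqueness, I would use that $\mathcal F$ is convex and that on $\cS^n_{++}$ the map $X\mapsto\log\det X$ is strictly concave, while $-\langle\Sigma,X\rangle$ and $-\sum_{(i,j)\notin\Omega}\rho_{ij}|X_{ij}|$ are concave; hence $f$ is strictly concave on $\cS^n_{++}$. If $X_1\neq X_2$ were both optimal, then by the previous paragraph $X_1,X_2\in\cS^n_{++}$, so their midpoint lies in $\mathcal F\cap\cS^n_{++}$ and yields a strictly larger objective value, a contradiction; hence the optimal solution $X^*\in\cS^n_{++}$ is unique. The one genuinely non-routine step is the coercivity estimate above: the work is in absorbing the non-smooth $\ell_1$ penalty into a linear term so that Assumption~\ref{assump} can be invoked to keep the superlevel sets bounded and bounded away from the boundary of $\cS^n_+$; everything else is standard convexity.
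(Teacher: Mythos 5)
Your proof is correct and takes essentially the same route as the paper's: feasibility of $I$, the coercivity estimate $f(X)\le\log\det X-\langle\Sigma+\Diag(\rho),X\rangle\le\sum_i\bigl(\log\lambda_i(X)-\mu\lambda_i(X)\bigr)$ obtained by absorbing the diagonal part of the $l_1$ penalty and invoking Assumption~\ref{assump}, compactness of a superlevel set, and strict concavity of $\log\det$ for uniqueness. Your eigenvalue step is in fact a slight streamlining: since each term $\log t-\mu t$ is bounded above by $-1-\log\mu$, bounding the sum below pins every eigenvalue into one compact subinterval of $(0,\infty)$ at once, whereas the paper first bounds $\lmax(X)$ and then runs a separate inequality to bound $\lmin(X)$ from below. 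One cosmetic slip: since the diagonal pairs are not in $\Omega$ they are penalized, so $f(I)=-\tr(\Sigma)-\sum_i\rho_{ii}$ rather than $-\tr(\Sigma)$; defining $L$ with threshold $f(I)$ itself (as the paper does) repairs the claim $I\in L$ without changing anything else.
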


\begin{proof}
Since $(i,i) \notin \Omega$ for $i=1,\ldots,n$, we see that $X=I$ is a feasible solution 
of problem \eqnok{gsparcov}. For convenience, let $f(X)$ denote the objective function 
of \eqnok{gsparcov}. We now show that the sup-level set $S_f(I)=\{X \succeq 0:\ f(X) \ge 
f(I), \ X_{ij}=0, \ \forall (i,j)\in\Omega\}$ is compact. Indeed, using the definition 
of $f(\cdot)$, we observe that for any $X\in S_f(I)$, 
\beqas
f(I) & \le & f(X) \ \le \ \log\det X - \langle \Sigma + \Diag(\rho), X \rangle \
\le \ \sum\limits_{i=1}^n \left[\log\lambda_i(X) - \lmin(\Sigma + \Diag(\rho)) 
\lambda_i(X)\right], \\ [6pt]
& \le & (n-1) \left[-1-\log\lmin(\Sigma + \Diag(\rho))\right] + \log\lmax(X) 
- \lmin(\Sigma + \Diag(\rho))\lmax(X),     
\eeqas 
where the last inequality follows from the fact that for any $a > 0$,  
\beq \label{log-max}
\max\limits_t \ \{\log t - a t: \ t \ge 0 \} \ = \ -1 - \log a. 
\eeq 
Hence, we obtain that for any $X\in S_f(I)$, 
\beq \label{log-bdd}
\log\lmax(X) 
- \lmin(\Sigma + \Diag(\rho))\lmax(X) \ \ge \ f(I) - (n-1) \left[-1-\log\lmin(
\Sigma + \Diag(\rho))\right],
\eeq
which implies that there exists some $\beta(\rho) > 0$ such that 
$\lmax(X) \le \beta(\rho)$ for all $X\in S_f(I)$. Thus, $S_f(I) \subseteq 
\{X \in \cS^n: \ 0 \preceq X \preceq \beta(\rho) I\}$. Further, using this result 
along with the definition of $f(\cdot)$, we easily observe that for any 
$X\in S_f(I)$, 
\[
\ba{lcl}
\log\lmin(X) & = & f(X) - \sum\limits_{i=1}^{n-1}\log\lambda_i(X) + \langle \Sigma, 
X \rangle + \sum\limits_{(i,j)\notin\Omega} \rho_{ij} |X_{ij}|, \\ [8pt]
& \ge & f(I) - (n-1) \log\beta(\rho) + \min\limits_{0 \preceq X \preceq \beta(\rho)} 
\{\langle \Sigma, X \rangle + \sum\limits_{(i,j)\notin\Omega} \rho_{ij} |X_{ij}|\}.
\ea
\]
It follows that there exists some $\alpha(\rho)>0$ such that $\lmin(X) \ge \alpha(\rho)$ 
for all $X\in S_f(I)$. Hence, $S_f(I) \subseteq \{X \in \cS^n: \ \alpha(\rho) I \preceq 
X \preceq \beta(\rho) I\}$ is bounded, which together with the fact that $f(\cdot)$ is 
continuous in the latter set, implies that $S_f(I)$ is closed. Therefore, problem 
\eqnok{gsparcov} has at least an optimal solution. Further, observing that $f(\cdot)$ 
is strict concave, we conclude that problem \eqnok{gsparcov} has a unique optimal solution.        
\end{proof}

\vgap

Similarly, we can show that the following result holds.
 
\begin{proposition} \label{penalty-soln}
Given any $\rho_{ij} \ge 0$ for $(i,j) \in \Omega$, problem \eqnok{penalty-prob1} has a 
unique optimal solution $X^* \in \cS^n_{++}$. 
\end{proposition}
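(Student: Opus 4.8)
The plan is to follow the argument used for Proposition~\ref{gsparcov-soln} almost verbatim; the only new point is that the penalty in \eqnok{penalty-prob1} now runs over \emph{all} pairs $(i,j)$, so I must dominate $\sum_{i,j}\rho_{ij}|X_{ij}|$ by something linear in $X$. Denote by $g(X)$ the objective function of \eqnok{penalty-prob1}. Since $\log\det X=-\infty$ unless $X\succ0$, the feasible set is effectively $\cS^n_{++}$, and $X=I$ lies in it; so it suffices to show that the sup-level set $S_g(I)=\{X\succeq0:\ g(X)\ge g(I)\}$ is compact and then to invoke strict concavity.

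For the upper bound on the largest eigenvalue, the key observation is that $X\succeq0$ forces $X_{ii}\ge0$, whence
\[
\sum_{i,j}\rho_{ij}|X_{ij}|\ \ge\ \sum_{i=1}^n\rho_{ii}X_{ii}\ =\ \langle\Diag(\rho),X\rangle,
\]
so that for every $X\in S_g(I)$ we have $g(I)\le g(X)\le\log\det X-\langle\Sigma+\Diag(\rho),X\rangle$. From here the same chain of estimates as in the proof of Proposition~\ref{gsparcov-soln}, using \eqnok{log-max} and Assumption~\ref{assump}, produces a constant $\beta(\rho)>0$ with $\lmax(X)\le\beta(\rho)$ for all $X\in S_g(I)$, i.e.\ $S_g(I)\subseteq\{X\in\cS^n:\ 0\preceq X\preceq\beta(\rho)I\}$.

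For the lower bound on the smallest eigenvalue the argument is in fact simpler than before: writing $\log\lmin(X)=g(X)-\sum_{i=1}^{n-1}\log\lambda_i(X)+\langle\Sigma,X\rangle+\sum_{i,j}\rho_{ij}|X_{ij}|$ and using $g(X)\ge g(I)$, $\langle\Sigma,X\rangle\ge0$, $\sum_{i,j}\rho_{ij}|X_{ij}|\ge0$ and $\lambda_i(X)\le\beta(\rho)$, one gets $\log\lmin(X)\ge g(I)-(n-1)\log\beta(\rho)$, hence $\lmin(X)\ge\alpha(\rho)$ for some $\alpha(\rho)>0$. Thus $S_g(I)\subseteq\{X\in\cS^n:\ \alpha(\rho)I\preceq X\preceq\beta(\rho)I\}$ is bounded, and since $g$ is continuous on this compact box, $S_g(I)$ is closed, hence compact; therefore \eqnok{penalty-prob1} attains its maximum. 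Uniqueness follows since $g$ is strictly concave on $\cS^n_{++}$ (the sum of the strictly concave $\log\det$, a linear term, and the concave $-\sum_{i,j}\rho_{ij}|X_{ij}|$), and the eigenvalue bound forces the optimal $X^*$ to lie in $\cS^n_{++}$.

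I do not expect any real obstacle here: the proof is a routine adaptation of Proposition~\ref{gsparcov-soln}. The one place worth a moment's care is the domination $\sum_{i,j}\rho_{ij}|X_{ij}|\ge\langle\Diag(\rho),X\rangle$, which uses both $X\succeq0$ and the nonnegativity of every $\rho_{ij}$ entering the sum --- the newly prescribed ones with $(i,j)\in\Omega$ as well as the fixed ones with $(i,j)\notin\Omega$ --- so that $\Diag(\rho)$ is still exactly the matrix appearing in Assumption~\ref{assump} and the latter remains applicable.
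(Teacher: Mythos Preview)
Your proposal is correct and is precisely the adaptation the paper has in mind: the paper gives no separate argument for Proposition~\ref{penalty-soln}, merely stating ``Similarly, we can show that the following result holds,'' i.e.\ it expects the reader to rerun the proof of Proposition~\ref{gsparcov-soln} with the penalty now summed over all $(i,j)$. Your key observation $\sum_{i,j}\rho_{ij}|X_{ij}|\ge\langle\Diag(\rho),X\rangle$ and your remark that $\Diag(\rho)$ is unaffected by the new $\rho_{ij}$ with $(i,j)\in\Omega$ (because $(i,i)\notin\Omega$) are exactly what is needed to make that adaptation go through.
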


Before presenting an algorithm framework for problem \eqnok{gsparcov}, we introduce a 
terminology for \eqnok{gsparcov} as follows.

\begin{definition}
Let $\epso \ge 0$ and $\eps_c \ge 0$ be given. Let $f(\cdot)$ and $f^*$ denote the 
objective function and the optimal value of \eqnok{gsparcov}, respectively. 
$X \in \cS^n_+$  is an $(\epso,\eps_c)$-optimal solution of problem \eqnok{gsparcov} if 
$f(X) \ge f^* - \epso$ and $\max\limits_{(i,j)\in\Omega} |X_{ij}| \le \eps_c$. 
\end{definition}

Analogously, we can define an $\epso$-optimal solution for problem \eqnok{penalty-prob1}. 
Given that our ultimate aim is to estimate a sparse inverse covariance matrix $X^* \succeq 0$ 
that satisfies at least $X^*_{ij}=0$, $\forall (i,j)\in\Omega$ and approximately maximizes 
the log-likelihood, we now briefly discuss how to obtain such an approximate solution $X^*$ 
from an $(\epso,\eps_c)$-optimal solution $\bar X^*$ of \eqnok{gsparcov}. Let us define 
$\tX^*\in\cS^n$ by letting $\tX^*_{ij}=\bar X^*_{ij}$, $\forall (i,j)\notin \Omega$  
and $\tX^*_{ij}=0$, $\forall (i,j)\in\Omega$. We then set $X^* := \tX^*+ t^* I$, where 
\[
t^* = \arg\max\ \{\log\det(\tX^*+t I) - \langle \Sigma, \tX^*+tI\rangle: \ t \ge -
\lambda_{\min}(\tX^*)\}.
\]
It is not hard to see that $t^*$ can be easily found. We also observe that such 
$X^*$ belongs to $\cS^n_{++}$, satisfies $X^*_{ij}=0$, $\forall (i,j)\in\Omega$ and 
retains the same sparsity as $\tX^*$. In addition, by setting the log-likelihood value at 
$\tX^*$ to $-\infty$ if $\lambda_{\min}(\tX^*) \le 0$, we can easily see that the 
log-likelihood value at $X^*$ is at least as good as that at $\tX^*$. Thus, $X^*$ 
is a desirable estimation of sparse inverse covariance, provided $\bar X^*$ is a good 
approximate solution to problem \eqnok{gsparcov}. 

In the remainder of this paper, we concentrate on finding an $(\epso,\eps_c)$-optimal solution 
of problem \eqnok{gsparcov} for any pair of positive $(\epso, \eps_c)$. We next present an 
algorithm framework for \eqnok{gsparcov} based on an adaptive $l_1$ penalty approach.

\gap

\noindent
\begin{minipage}[h]{6.6 in}
{\bf Algorithm framework for general sparse inverse covariance selection  (GSICS):} \\ [5pt]
Let $\epso > 0$, $\eps_c > 0$ and $r_{\rho} > 1$ be given. Let $\rho_{ij}^0 >0, \forall (i,j) 
\in \Omega$ be given such that $\rho_{ij}^0=\rho_{ji}^0, \forall (i,j) \in \Omega$. 
Set $\rho_{ij} = \rho_{ij}^0$ for all $(i,j) \in \Omega$. 
\begin{itemize}
\item[1)] Find an $\epso$-optimal solution $X^{\epso}$ of problem \eqnok{penalty-prob1}.
\item[2)] If $\max\limits_{(i,j)\in\Omega} |X^{\epso}_{ij}| \le \eps_c$, 
terminate. Otherwise, set $\rho_{ij} \leftarrow \rho_{ij} r_{\rho}$ for all $(i,j) \in 
\Omega$, and go to step 1).    
\end{itemize}
\noindent
{\bf end}
\end{minipage}

\gap
     
\begin{remark} 
To make the above framework complete, we need to choose suitable methods for solving problem 
\eqnok{penalty-prob1} in step 1). We will propose first-order methods for it in Subsection 
\ref{special}. In step 2) of the framework GSICS, there are some other strategies for 
updating the penalty parameters $\{\rho_{ij}\}_{(i,j) \in \Omega}$. For example, for 
any $(i,j) \in \Omega$, one can update $\rho_{ij}$ only if $\rho_{ij} > \eps_c$. But 
we observed in our experimentation that this strategy performs worse than the one described 
above. In addition, instead of using a common ratio $r_{\rho}$ for all $(i,j) \in \Omega$, 
one can associate with each $\rho_{ij}$ an individual ratio $r_{ij}$. Also, the ratio 
$r_{\rho}$ is no need to be fixed for all iterations, and it can vary from iteration to 
iteration depending on the amount of violation incurred in $\max\limits_{(i,j)\in\Omega} 
|X^{\epso}_{ij}| \le \eps_c$.      
\end{remark}

\vgap

Before discussing the convergence of the framework GSICS, we first study the 
convergence of the $l_1$ penalty method for a general nonlinear programming (NLP) problem.  

Given a set $\emptyset \neq \setX \subseteq \Re^n$ and functions $f: \setX \to \Re$, 
$g: \setX \to \Re^k$ and $h: \setX \to \Re^l$, consider the NLP problem:
\beq \label{nlp} 
\ba{lcl}
f^* &=& \sup\limits_{x\in\setX}  \ \ f(x) \\
& & \mbox{s.t.} \ \ g(x) = 0, \, \, h(x) \le 0.
\ea 
\eeq
We associate with the NLP problem \eqnok{nlp} the following $l_1$ penalty function:
\beq \label{penalty-fun} 
P(x; \lambda, \mu) \ := \ f(x) - \lambda^T |g(x)| - \mu^T h^+(x),
\eeq
where $\lambda \in \Re^k_+$, $\mu \in \Re^l_+$ and $(h^+(x))_i = \max\{0,h_i(x)\}$ for 
$i=1,\ldots,l$.

We now establish a convergence result for the $l_1$ penalty method for the NLP 
problem \eqnok{nlp} under some assumption on $f(x)$.

\begin{proposition} \label{nlp-converg}
Let $\epso > 0$ and $\eps_c >0$ be given. Assume that there exists some $\bar f \in \Re$ 
such that $f(x) \le \bar f$ for all $x \in \setX$. Let $x^{\epso}_{\lambda,\mu} \in \setX$ 
be an $\epso$-optimal solution of the problem 
\beq \label{penalty-opt}
\sup \ \{P(x; \lambda, \mu): \ x \in \setX\} 
\eeq
for $\lambda\in\Re^k_+$ and $\mu\in\Re^l_+$, and let $v_{\lambda,\mu} := 
\min\{\min\limits_i\lambda_i, \min\limits_i \mu_i\}$. Then $f(x^{\epso}_{\lambda,\mu}) 
\ge f^* - \epso$, and, moreover, $\left\|\left(g(x^{\epso}_{\lambda,\mu}); 
h^+(x^{\epso}_{\lambda,\mu}) \right) \right\|_{\infty} \le \eps_c$ holds whenever 
$v_{\lambda,\mu} \ge (\bar f - f^* + \epso)/\eps_c$, where $f^*$ is the optimal 
value of the NLP problem \eqnok{nlp}.
\end{proposition}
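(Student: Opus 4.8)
The plan is to run the classical exact $\ell_1$-penalty argument in an approximate (``$\epso$-optimal'') form. First I would record the one structural fact that drives everything: if $x\in\setX$ is feasible for \eqnok{nlp}, i.e.\ $g(x)=0$ and $h(x)\le 0$, then $|g(x)|=0$ and $h^+(x)=0$, so $P(x;\lambda,\mu)=f(x)$. Taking suprema over feasible points gives $\sup_{x\in\setX}P(x;\lambda,\mu)\ge f^*$ (this is where one tacitly uses that the feasible set is nonempty; otherwise $f^*=-\infty$ and both claims are vacuous). Combining this with the defining property of an $\epso$-optimal solution of \eqnok{penalty-opt} yields $P(x^{\epso}_{\lambda,\mu};\lambda,\mu)\ge f^*-\epso$.

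Next I would use nonnegativity of the penalty terms. Since $\lambda\in\Re^k_+$, $\mu\in\Re^l_+$, and $|g(\cdot)|,h^+(\cdot)\ge 0$ componentwise, we have $f(x^{\epso}_{\lambda,\mu})\ge P(x^{\epso}_{\lambda,\mu};\lambda,\mu)\ge f^*-\epso$, which is the first assertion. Reading the same inequality the other way and invoking the uniform bound $f\le\bar f$ on $\setX$ gives
\[
\lambda^T|g(x^{\epso}_{\lambda,\mu})|+\mu^T h^+(x^{\epso}_{\lambda,\mu})\ \le\ f(x^{\epso}_{\lambda,\mu})-f^*+\epso\ \le\ \bar f-f^*+\epso .
\]

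For the feasibility estimate I would bound every coefficient $\lambda_i,\mu_i$ below by $v_{\lambda,\mu}=\min\{\min_i\lambda_i,\min_i\mu_i\}$; since all the quantities multiplied are nonnegative,
\[
v_{\lambda,\mu}\,\big\|(g(x^{\epso}_{\lambda,\mu});h^+(x^{\epso}_{\lambda,\mu}))\big\|_1\ \le\ \lambda^T|g(x^{\epso}_{\lambda,\mu})|+\mu^T h^+(x^{\epso}_{\lambda,\mu})\ \le\ \bar f-f^*+\epso .
\]
Hence $\|(g;h^+)\|_\infty\le\|(g;h^+)\|_1\le(\bar f-f^*+\epso)/v_{\lambda,\mu}$, and since $\bar f\ge f^*$ the threshold $(\bar f-f^*+\epso)/\eps_c$ is a well-defined positive number; imposing $v_{\lambda,\mu}\ge(\bar f-f^*+\epso)/\eps_c$ then forces $\|(g(x^{\epso}_{\lambda,\mu});h^+(x^{\epso}_{\lambda,\mu}))\|_\infty\le\eps_c$, as claimed.

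I do not expect a genuine obstacle: the proof is a short chain of inequalities. The points that need care are purely bookkeeping — that each penalty term is $\ge 0$ because $\lambda,\mu\ge 0$; that $\bar f\ge f^*$ (so the stated threshold makes sense and the extra $+\epso$ is harmless); and the elementary norm comparison $\|\cdot\|_\infty\le\|\cdot\|_1$, which is exactly the step that upgrades the aggregate penalty bound to the componentwise ($\ell_\infty$) feasibility bound. The only deviation from the textbook exact-penalty lemma is that we work with an $\epso$-optimal maximizer of the penalized problem rather than an exact one, which is why the bound on $v_{\lambda,\mu}$ picks up the additive $\epso$.
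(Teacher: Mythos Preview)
Your proof is correct and follows essentially the same route as the paper's: both use $f\ge P$ (nonnegativity of the penalty terms), $P(x^{\epso}_{\lambda,\mu};\lambda,\mu)\ge f^*-\epso$ (via $\sup P\ge f^*$ and $\epso$-optimality), the lower bound $v_{\lambda,\mu}\|(g;h^+)\|_1\le \lambda^T|g|+\mu^T h^+$, the comparison $\|\cdot\|_\infty\le\|\cdot\|_1$, and finally $f\le\bar f$. Your remark that the argument for $\sup P\ge f^*$ implicitly needs the feasible set of \eqnok{nlp} to be nonempty (else $f^*=-\infty$ and the claims are vacuous) is a useful clarification that the paper glosses over.
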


\begin{proof}
In view of the assumption that $f(x)$ is bounded above in $\setX$, we clearly see that 
$f^*$ is finite. Let $f^*_{\lambda,\mu}$ denote the optimal value of problem \eqnok{penalty-opt}. 
We easily observe that $f^*_{\lambda,\mu} \ge f^*$. Using this relation, \eqnok{penalty-fun} 
and the fact that $x^{\epso}_{\lambda,\mu}$ is an $\epso$-optimal solution of \eqnok{penalty-opt}, 
we have 
\beq \label{low-bdd}
f(x^{\epso}_{\lambda,\mu}) \ \ge \ P(x^{\epso}_{\lambda,\mu}; \lambda, \mu) \ \ge \
f^*_{\lambda,\mu} - \epso \ \ge \ f^* - \epso,
\eeq
and hence the first statement holds. We now prove the second statement. Using \eqnok{penalty-fun}, 
\eqnok{low-bdd} and the definition of $v_{\lambda,\mu}$, we have 
\beqa
f(x^{\epso}_{\lambda,\mu}) - v_{\lambda, \mu} \left\|\left(g(x^{\epso}_{\lambda,\mu}); 
h^+(x^{\epso}_{\lambda,\mu})\right)\right\|_{\infty} 
& \ge &
f(x^{\epso}_{\lambda,\mu}) - v_{\lambda, \mu} \left\|\left(g(x^{\epso}_{\lambda,\mu}); 
h^+(x^{\epso}_{\lambda,\mu})\right)\right\|_1 \nn \\
& \ge & P(x^{\epso}_{\lambda,\mu}; \lambda, \mu) \ \ge \ f^* - \epso. \label{penalty-ineq}
\eeqa 
Further, from the assumption, we know $f(x^{\epso}_{\lambda,\mu}) \le \bar f$ 
due to $x^{\epso}_{\lambda,\mu} \in \setX$. This together with \eqnok{penalty-ineq} 
immediately implies that the second statement holds.
\end{proof}

\vgap

We are now ready to establish a convergence result for the framework GSICS.

\begin{theorem} \label{sparcov-converg}
Let $\epso > 0$ and $\eps_c >0$ be given. Suppose that in step 1) of the framework 
GSICS, an $\epso$-optimal solution $X^{\epso}$ of problem \eqnok{penalty-prob1} is 
obtained by some method. Then, the framework GSICS generates an $(\epso,\eps_c)$-optimal 
solution to problem \eqnok{gsparcov} in a finite number of outer iterations, or 
equivalently, a finite number of updates on the penalty parameters $\{\rho_{ij}\}_{(i,j) 
\in \Omega}$.
\end{theorem}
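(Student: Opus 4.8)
The plan is to reduce Theorem~\ref{sparcov-converg} to the general penalty result Proposition~\ref{nlp-converg} by viewing problem \eqnok{gsparcov} as an instance of the NLP \eqnok{nlp}, and then to show that the geometric growth of the penalty parameters forces the stopping test in step~2) to succeed after finitely many outer iterations. First I would fix the correspondence: take $\setX = \{X \succeq 0\}$ (or, equivalently after the reduction, $\cS^n_{++}$), let $f(X) = \log\det X - \langle \Sigma, X\rangle - \sum_{(i,j)\notin\Omega}\rho_{ij}|X_{ij}|$ be the objective of \eqnok{gsparcov}, let the equality constraints $g(X)=0$ be the finitely many entrywise conditions $X_{ij}=0$ for $(i,j)\in\Omega$, and let there be no inequality constraints $h$. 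With this identification, the $l_1$ penalty function $P(X;\rho,\cdot)$ of \eqnok{penalty-fun} is exactly the objective of the penalized problem \eqnok{penalty-prob1} (after pairing up the symmetric entries $(i,j)$ and $(j,i)$), and an $\epso$-optimal solution of \eqnok{penalty-prob1} in the sense used by the framework is an $\epso$-optimal solution of \eqnok{penalty-opt}. Here I should note that $h$ is absent, so $v_{\lambda,\mu}$ in Proposition~\ref{nlp-converg} is just $\min_{(i,j)\in\Omega}\rho_{ij}$.

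The one hypothesis of Proposition~\ref{nlp-converg} that needs checking is that $f$ is bounded above on $\setX$. This follows from the estimates already carried out in the proof of Proposition~\ref{gsparcov-soln}: under Assumption~\ref{assump}, $f(X) \le \log\det X - \langle \Sigma + \Diag(\rho), X\rangle \le \sum_{i=1}^n\left[\log\lambda_i(X) - \lmin(\Sigma+\Diag(\rho))\lambda_i(X)\right] \le n\left[-1 - \log\lmin(\Sigma+\Diag(\rho))\right]$ by \eqnok{log-max}, so we may take $\bar f := n\left[-1 - \log\lmin(\Sigma+\Diag(\rho))\right] < \infty$. Thus Proposition~\ref{nlp-converg} applies: whenever the current penalty parameters satisfy $\min_{(i,j)\in\Omega}\rho_{ij} \ge (\bar f - f^* + \epso)/\eps_c =: M$, the $\epso$-optimal solution $X^{\epso}$ returned in step~1) automatically satisfies both $f(X^{\epso}) \ge f^* - \epso$ and $\max_{(i,j)\in\Omega}|X^{\epso}_{ij}| = \left\|g(X^{\epso})\right\|_\infty \le \eps_c$, i.e.\ it is an $(\epso,\eps_c)$-optimal solution of \eqnok{gsparcov}, and so the framework terminates in step~2).

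It remains to observe that the threshold $M$ is reached after finitely many updates. Each outer iteration that does not terminate multiplies every $\rho_{ij}$, $(i,j)\in\Omega$, by $r_\rho > 1$, so after $k$ updates $\min_{(i,j)\in\Omega}\rho_{ij} = r_\rho^k \min_{(i,j)\in\Omega}\rho_{ij}^0$, which grows without bound; hence there is a finite $k$ with $r_\rho^k \min_{(i,j)\in\Omega}\rho_{ij}^0 \ge M$, and the framework must have stopped by then. The only subtlety I anticipate — and the step I would be most careful about — is the bookkeeping between the symmetric pair $\{(i,j),(j,i)\}$ and the single scalar constraint/penalty term when matching \eqnok{penalty-prob1} to \eqnok{penalty-fun}: one should state once that $X$ is symmetric and $\rho_{ij}=\rho_{ji}$, so that summing over ordered pairs in \eqnok{penalty-prob1} versus over unordered pairs in the NLP formulation differs only by a harmless factor of two in the penalty weights, which does not affect either the boundedness argument or the $r_\rho^k\to\infty$ argument. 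Everything else is a direct quotation of Proposition~\ref{nlp-converg} and Proposition~\ref{gsparcov-soln}.
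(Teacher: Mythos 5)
Your proposal is correct and follows essentially the same route as the paper: establish that the objective of \eqnok{gsparcov} is bounded above on $\{X \succeq 0\}$ under Assumption~\ref{assump} and then invoke Proposition~\ref{nlp-converg} together with the geometric growth of $\min_{(i,j)\in\Omega}\rho_{ij}$. The only cosmetic difference is your choice of upper bound $\bar f = n\left[-1-\log\lmin(\Sigma+\Diag(\rho))\right]$ versus the paper's $\sup\{\log\det Y - \langle \Sigma+\Diag(\rho), Y\rangle : Y \succeq 0\}$, both of which are valid, and you spell out the finite-termination bookkeeping that the paper leaves implicit.
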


\begin{proof}
Invoking that $\Sigma+\Diag(\rho) \succ 0$ (see Assumption \ref{assump}), we see that 
for any $X\in\cS^n_+$,
\[
\ba{lcl}
\log\det X - \langle \Sigma, X \rangle - \sum\limits_{(i,j) \notin \Omega} \rho_{ij} 
|X_{ij}| & \le & \log\det X - \langle \Sigma+\Diag(\rho), X \rangle  \\ 
& \le & \sup \{\log\det Y - \langle \Sigma+\Diag(\rho), Y \rangle: \ Y \succeq 0\} 
\ < \ \infty,
\ea
\] 
where the last inequality follows from the fact that the above maximization 
problem achieves its optimal value at $Y = (\Sigma+\Diag(\rho))^{-1} \succ 0$. 
This observation together with Proposition \ref{nlp-converg} immediately yields 
the conclusion. 
\end{proof}

\subsection{Adaptive first-order methods for problem \eqnok{penalty-prob1}}
\label{special}

In this subsection, we will discuss some suitable first-order methods for solving 
problem \eqnok{penalty-prob1} that appears in step 1) of the algorithm framework 
GSICS.

As seen from Proposition \ref{penalty-soln}, problem \eqnok{penalty-prob1} has 
a unique optimal solution. We next provide some bounds on it.  

\begin{proposition} \label{soln-bdd}
Let $f_{\rho}(\cdot)$ and $X^*_{\rho}$ denote the objective function and 
the unique optimal solution of problem \eqnok{penalty-prob1}, respectively. 
Let $\vartheta$ be defined as 
\beq \label{vtheta}
\vartheta := \max\left\{f_{\rho}((\Sigma+\Diag(\rho))^{-1}), \theta\right\} 
- (n-1) [-1-\log\lmin(\Sigma+\Diag(\rho))],
\eeq
where $\theta := n (-1-\log\tr(\Sigma+\rho) + \log n)$. Then 
$\arho I \preceq X^*_{\rho}  \preceq \brho I$, where 
$\arho := 1/(\|\Sigma\|+\|\rho\|)$ and $\brho$ is the largest positive 
root of the following equation 
\[
\log t - \lmin(\Sigma+\Diag(\rho)) t - \vartheta = 0. 
\]    
\end{proposition}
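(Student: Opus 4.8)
The plan is to prove the two inclusions separately: the lower bound $\arho I\preceq X^*_\rho$ from the first-order optimality conditions at $X^*_\rho$, and the upper bound $X^*_\rho\preceq\brho I$ from a level-set (log-barrier) argument analogous to the proof of Proposition~\ref{gsparcov-soln}. By Proposition~\ref{penalty-soln} the maximizer $X^*_\rho$ exists, is unique, and lies in $\cS^n_{++}$, so it is an interior point of the feasible region $\{X\succeq 0\}$ and no multiplier for $X\succeq 0$ is needed.

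For the lower bound I would write the optimality condition for the concave program \eqnok{penalty-prob1}. The smooth part $\log\det X-\langle\Sigma,X\rangle$ has gradient $X^{-1}-\Sigma$, and the superdifferential of the concave term $-\sum_{i,j}\rho_{ij}|X_{ij}|$ at $X^*_\rho$ consists of the matrices $-G$ with $G_{ij}=\rho_{ij}\,{\rm sign}\big((X^*_\rho)_{ij}\big)$ when $(X^*_\rho)_{ij}\neq0$ and $G_{ij}\in[-\rho_{ij},\rho_{ij}]$ otherwise; in particular $|G_{ij}|\le\rho_{ij}$ for all $i,j$. Interior optimality then gives $(X^*_\rho)^{-1}=\Sigma+G$. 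I would then bound $\|G\|$: for any unit vector $x$, $|x^TGx|\le\sum_{i,j}|G_{ij}|\,|x_i|\,|x_j|\le|x|^T\rho\,|x|\le\|\rho\|$, using $\||x|\|=\|x\|=1$; hence $\|G\|\le\|\rho\|$ and $\|(X^*_\rho)^{-1}\|\le\|\Sigma\|+\|\rho\|$. Since $(X^*_\rho)^{-1}\succ 0$, this yields $\lmin(X^*_\rho)=\|(X^*_\rho)^{-1}\|^{-1}\ge 1/(\|\Sigma\|+\|\rho\|)=\arho$, i.e.\ $\arho I\preceq X^*_\rho$.

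For the upper bound I would first record two lower bounds on the optimal value $f^*_\rho$. By feasibility of $(\Sigma+\Diag(\rho))^{-1}\succ0$ (Assumption~\ref{assump}), $f^*_\rho\ge f_\rho\big((\Sigma+\Diag(\rho))^{-1}\big)$. Evaluating $f_\rho$ along the diagonal matrices gives $f_\rho(tI)=n\log t-t\,\tr(\Sigma+\rho)$, whose maximum over $t>0$ is $\theta=n(-1-\log\tr(\Sigma+\rho)+\log n)$ (with $\tr(\Sigma+\rho)>0$ again by Assumption~\ref{assump}), so $f^*_\rho\ge\theta$; thus $f^*_\rho\ge\max\{f_\rho((\Sigma+\Diag(\rho))^{-1}),\theta\}$. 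On the other hand, dropping the $\ell_1$ terms (they are nonnegative and $(X^*_\rho)_{ii}\ge0$), using $\langle\Sigma+\Diag(\rho),X^*_\rho\rangle\ge\lmin(\Sigma+\Diag(\rho))\tr(X^*_\rho)$, and applying \eqnok{log-max} to the $n-1$ smallest eigenvalues exactly as in Proposition~\ref{gsparcov-soln}, one gets
\[
f^*_\rho\ \le\ (n-1)\big[-1-\log\lmin(\Sigma+\Diag(\rho))\big]+\log\lmax(X^*_\rho)-\lmin(\Sigma+\Diag(\rho))\,\lmax(X^*_\rho).
\]
Combining the two estimates gives $\log\lmax(X^*_\rho)-\lmin(\Sigma+\Diag(\rho))\,\lmax(X^*_\rho)\ge\vartheta$ with $\vartheta$ as in \eqnok{vtheta}. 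The map $t\mapsto\log t-\lmin(\Sigma+\Diag(\rho))t$ is strictly concave on $(0,\infty)$ with maximum value $-1-\log\lmin(\Sigma+\Diag(\rho))\ge\vartheta$ (the last inequality because $f^*_\rho\le n[-1-\log\lmin(\Sigma+\Diag(\rho))]$), so its super-level set at level $\vartheta$ is a nonempty bounded closed interval $[t_1,t_2]\subset(0,\infty)$ whose endpoints satisfy $\log t_i-\lmin(\Sigma+\Diag(\rho))t_i-\vartheta=0$. Hence $\lmax(X^*_\rho)\le t_2=\brho$, i.e.\ $X^*_\rho\preceq\brho I$.

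The eigenvalue/trace inequalities and the scalar optimizations are routine; the step that needs care is the lower bound, specifically (a) justifying the interior optimality relation $(X^*_\rho)^{-1}=\Sigma+G$ with $G$ an entrywise-bounded subgradient of the nonsmooth $\ell_1$ penalty, and (b) the passage from $|G_{ij}|\le\rho_{ij}$ to the operator-norm bound $\|G\|\le\|\rho\|$ via the vector $|x|$. A secondary point is checking that the super-level set in the upper-bound argument is nonempty, i.e.\ $\vartheta\le-1-\log\lmin(\Sigma+\Diag(\rho))$, so that $\brho$ is well defined.
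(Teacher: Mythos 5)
Your proof is correct and follows essentially the same route as the paper: the lower bound comes from the stationarity identity $(X^*_\rho)^{-1}=\Sigma+G$ with $G$ entrywise bounded by $\rho$ (the paper writes $G=\rho\ast U^*$ via a saddle point of $\phi(X,U)$ over $\cU$, which is the same subgradient characterization), and the upper bound comes from the same pair of lower estimates on $f^*_\rho$ combined with the level-set argument from \eqnok{log-bdd}. Your explicit verification that $\vartheta\le-1-\log\lmin(\Sigma+\Diag(\rho))$, so that $\brho$ is well defined, is a small point the paper leaves implicit.
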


\begin{proof}
Let
\beq \label{cU}
\cU := \{U\in \cS^n: \ |U_{ij}| \le 1, \forall ij\},  
\eeq
and 
\beq \label{phi}
\phi(X, U) := \log\det X - \langle \Sigma+\rho \ast U, X \rangle, \ \ \ 
\forall (X,U) \in \cS^n_{++} \times \cU. 
\eeq  
Since $X^*_{\rho} \in \cS^n_{++}$ is the optimal solution of problem 
\eqnok{penalty-prob1}, it can be easily shown that there exists some 
$U^*\in \cU$ such that $(X^*_{\rho}, U^*)$ is a saddle point of 
$\phi(\cdot, \cdot)$ in $\cS^n_{++} \times \cU$, and hence   
\[
X^*_{\rho} = \arg\min\limits_{X \in \cS^n_{++}} \phi(X,U^*).
\]
This relation along with \eqnok{phi} immediately yields 
$X^*_{\rho}(\Sigma + \rho \ast U^*) = I$. Hence, we have   
\[
X^*_{\rho} = (\Sigma + \rho \ast U^*)^{-1} \succeq \frac{1}{\|\Sigma\|+ \|\rho 
\ast U^*\|} I,
\]
which together with \eqnok{cU} and the fact that $U^*\in \cU$, implies 
that $X^* \succeq \frac{1}{\|\Sigma\|+ \|\rho\|} I$. Thus, 
$X^*_{\rho} \succeq \arho I$ as desired. 

We next bound $X^*_{\rho}$ from above. Let $f^*_{\rho}$ denote the optimal 
value of problem \eqnok{penalty-prob1}. In view of the definition of 
$f_{\rho}(\cdot) $ and \eqnok{log-max}, we have 
\[
f^*_{\rho} \ \ge \ \max\limits_{t>0} f_{\rho}(tI) \ = \ \max\limits_{t>0} \ n 
\log t - t \tr(\Sigma + \rho) \ = \ n (-1-\log\tr(\Sigma+\rho) + \log n)  
\ =: \ \theta.
\] 
Thus, $f^*_{\rho} \ge \max\{f_{\rho}((\Sigma+\Diag(\rho))^{-1}), \theta\}$. 
Using this result and following a similar procedure as for deriving 
\eqnok{log-bdd}, we can show that 
\[
\log\lmax(X^*_{\rho}) - \lmin(\Sigma + \Diag(\rho))\lmax(X^*_{\rho}) \ \ge \ 
\vartheta,  
\]
where $\vartheta$ is given in \eqnok{vtheta}, and hence the statement 
$X^*_{\rho} \preceq \brho I$ immediately follows.    
\end{proof}

\vgap

In view of Proposition \ref{soln-bdd}, we see that problem \eqnok{penalty-prob1} 
is equivalent to the following problem
\beq \label{penalty-prob2}
\max\limits_{\arho \preceq X \preceq \brho} \ \log\det X - \langle \Sigma, X 
\rangle - \sum\limits_{i,j} \rho_{ij} |X_{ij}|,   
\eeq
where $\arho$ and $\brho$ are defined in Proposition \ref{soln-bdd}.  

We further observe that problem \eqnok{penalty-prob2} can be rewritten as  
\beq \label{primal}
\max\limits_{X \in \cX_{\rho}} \ \{f_{\rho}(X) := \min\limits_{U \in \cU} \ \phi(X,U)\}, 
\eeq 
where $\cU$ and $\phi(\cdot, \cdot)$ are given in \eqnok{cU} and \eqnok{phi}, 
respectively, and $\cX_{\rho}$ is defined as follows:
\beq \label{cX}
\cX_{\rho} := \{X\in\cS^n: \ \arho I \preceq X \preceq \brho I\}.
\eeq

Observing that $\phi(X,U): \cX_{\rho} \times \cU \to \Re $ is a smooth function 
which is {\it strictly} concave in $X\in \cX_{\rho}$ for every fixed $U\in \cU$, 
and convex in $U\in \cU$ for every fixed $X\in \cX_{\rho}$, we can conclude that   
(i) problem \eqnok{primal} and its dual, that is, 
\beq \label{dual}
\min\limits_{U \in \cU} \ \{g_{\rho}(U):= \max_{X\in\cX_{\rho}} \phi(X,U) \}
\eeq
are both solvable and have the same optimal value; 
and (ii) the function $g_{\rho}(\cdot)$ is convex differentiable and its gradient 
is given by 
\[
\nabla g_{\rho}(U) = \nabla_U \phi(X(U),U), \ \forall U \in \cU,
\]
where 
\beq \label{XU}
X(U) := \arg\max\limits_{X\in\cX_{\rho}} \phi(X,U).
\eeq

The following result shows that the approximate solution of problem \eqnok{primal} 
(or equivalently, \eqnok{penalty-prob1}) can be obtained by solving smooth 
convex problem \eqnok{dual}.

\begin{proposition} \label{pd-soln}
Let $X^*_{\rho}$ be the unique optimal solution of problem \eqnok{primal}, and let 
$f^*_{\rho}$ be the optimal value of problems \eqnok{primal} and \eqnok{dual}. 
Suppose that the sequence $\{U_k\}^{\infty}_{k=0} \subseteq \cU$ is such that 
$g_{\rho}(U_k) \to f^*_{\rho}$ as $k \to \infty$. Then, $X(U_k) \to X^*_{\rho}$ 
and $g_{\rho}(U_k) - f_{\rho}(X(U_k)) \to 0$ as $k \to \infty$, where 
$X(\cdot)$ is defined in \eqnok{XU}.     
\end{proposition}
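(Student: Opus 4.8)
The plan is to exploit the strict concavity of $\phi(\cdot,U)$ in $X$ together with compactness of $\cX_\rho$ to upgrade convergence of function values to convergence of the maximizers. First I would record the elementary fact that the duality gap bounds everything: since $f^*_\rho$ is the common optimal value of \eqnok{primal} and \eqnok{dual}, and $f_\rho(X(U_k)) \le f^*_\rho \le g_\rho(U_k)$ for every $k$ (weak duality, plus $X(U_k)\in\cX_\rho$ being feasible for the primal), the hypothesis $g_\rho(U_k)\to f^*_\rho$ immediately forces $g_\rho(U_k)-f_\rho(X(U_k))\to 0$ and also $f_\rho(X(U_k))\to f^*_\rho$. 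That settles the second assertion and reduces the first to: if a sequence $\{X_k\}\subseteq\cX_\rho$ satisfies $f_\rho(X_k)\to f^*_\rho=f_\rho(X^*_\rho)$, then $X_k\to X^*_\rho$.

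For this last implication I would argue by contradiction using compactness. Since $\cX_\rho=\{X:\arho I\preceq X\preceq\brho I\}$ is compact and $f_\rho$ is continuous on it, any subsequential limit $\bar X$ of $\{X_k\}$ lies in $\cX_\rho$ and satisfies $f_\rho(\bar X)=f^*_\rho$ by continuity; but $X^*_\rho$ is the \emph{unique} maximizer (Proposition \ref{gsparcov-soln}/\ref{penalty-soln} give uniqueness via strict concavity of $f_\rho$ on $\cS^n_{++}$, and $\cX_\rho\subseteq\cS^n_{++}$ since $\arho>0$), so $\bar X=X^*_\rho$. Hence every convergent subsequence of $\{X_k\}$ has limit $X^*_\rho$, and since $\{X_k\}$ lives in the compact set $\cX_\rho$, the whole sequence converges to $X^*_\rho$. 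Finally, $X(U_k)$ is exactly such a sequence $\{X_k\}$, by the paragraph above, so $X(U_k)\to X^*_\rho$.

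There is one point that deserves a line of care rather than real difficulty: I should note that $f_\rho$ is genuinely continuous (indeed finite) on all of $\cX_\rho$, which is where the lower bound $\arho I\preceq X$ from Proposition \ref{soln-bdd} is used — it keeps $\log\det X$ and the inner minimization over $\cU$ well-behaved away from the boundary of $\cS^n_+$. I would also remark that $f_\rho$ being a minimum over $U\in\cU$ of functions jointly continuous in $(X,U)$ with $\cU$ compact makes $f_\rho$ continuous on $\cX_\rho$, so no extra regularity argument is needed. The only genuine ingredient beyond compactness is the \emph{uniqueness} of the maximizer, which the paper has already established; the rest is the standard "continuous function on a compact set with a unique maximizer $\Rightarrow$ approximate maximizers converge" argument, and I expect that routing through a subsequence-and-contradiction step will be the cleanest presentation. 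The main (mild) obstacle is simply making sure the feasibility inequality $f_\rho(X(U_k))\le f^*_\rho\le g_\rho(U_k)$ is stated with the right justification, namely weak duality between \eqnok{primal} and \eqnok{dual} together with $X(U_k)\in\cX_\rho$.
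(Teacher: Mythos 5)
There is a genuine gap at the very first step, and unfortunately the rest of your argument is built on it. From the sandwich $f_{\rho}(X(U_k)) \le f^*_{\rho} \le g_{\rho}(U_k)$ together with $g_{\rho}(U_k) \to f^*_{\rho}$ you conclude that $g_{\rho}(U_k) - f_{\rho}(X(U_k)) \to 0$ and $f_{\rho}(X(U_k)) \to f^*_{\rho}$ "immediately". But the sandwich only gives an \emph{upper} bound on $f_{\rho}(X(U_k))$ that converges to $f^*_{\rho}$; writing the gap as $g_{\rho}(U_k) - f_{\rho}(X(U_k)) = \bigl(g_{\rho}(U_k) - f^*_{\rho}\bigr) + \bigl(f^*_{\rho} - f_{\rho}(X(U_k))\bigr)$, both terms are nonnegative and only the first is controlled by the hypothesis. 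Nothing you have written rules out $f_{\rho}(X(U_k))$ staying bounded away from $f^*_{\rho}$ from below: $X(U_k)$ is the maximizer of $\phi(\cdot,U_k)$, not an approximate maximizer of $f_{\rho}$, and the link between "$U_k$ nearly dual-optimal" and "$X(U_k)$ nearly primal-optimal" is precisely the content of the proposition, not a consequence of weak duality. Your subsequent compactness-plus-uniqueness argument is fine as far as it goes, but its premise ($f_{\rho}(X(U_k)) \to f^*_{\rho}$) is exactly what was never established.

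The correct logical order is the reverse of yours, and this is the route taken in Theorem 2.4 of \cite{Lu07}, to which the paper's proof defers. One uses the \emph{strong} concavity of $\phi(\cdot,U)$ on $\cX_{\rho}$: since $\arho I \preceq X \preceq \brho I$ forces $-\nabla^2_{XX}\phi(X,U) = X^{-1}\otimes X^{-1} \succeq \brho^{-2} I$, one has, for every $X \in \cX_{\rho}$,
\begin{equation*}
\phi(X,U_k) \ \le \ \phi(X(U_k),U_k) - \tfrac{1}{2\brho^{2}}\,\|X - X(U_k)\|_F^2 .
\end{equation*}
Taking $X = X^*_{\rho}$ and using $\phi(X^*_{\rho},U_k) \ge \min_{U\in\cU}\phi(X^*_{\rho},U) = f_{\rho}(X^*_{\rho}) = f^*_{\rho}$ and $\phi(X(U_k),U_k) = g_{\rho}(U_k)$ yields $\|X^*_{\rho} - X(U_k)\|_F^2 \le 2\brho^{2}\bigl(g_{\rho}(U_k) - f^*_{\rho}\bigr) \to 0$, i.e.\ $X(U_k) \to X^*_{\rho}$ first; the continuity of $f_{\rho}$ on $\cX_{\rho}$ (which you did justify correctly) then gives $f_{\rho}(X(U_k)) \to f^*_{\rho}$ and hence the vanishing gap. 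Alternatively one can repair your compactness route, but it requires passing to convergent subsequences of $(U_k, X(U_k))$ jointly and invoking the saddle-point characterization to identify the limit of $X(U_{k_j})$ with $X^*_{\rho}$ --- again more than uniqueness of the primal maximizer alone.
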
 

\begin{proof}
The proof is similar to that of Theorem 2.4 of Lu \cite{Lu07}.
\end{proof}

\vgap

 From Proposition \ref{pd-soln}, we see that problem \eqnok{penalty-prob1} can 
be solved simultaneously while solving problem \eqnok{dual}. Indeed, suppose 
that $\{U_k\}^{\infty}_{k=0} \subseteq \cU$ is a sequence of approximate solutions 
generated by some method for solving \eqnok{dual}. It follows from Proposition \ref{pd-soln} 
that given any $\epso > 0$, there exists some iterate $U_k$ such that $g_{\rho}(U_k) 
- f_{\rho}(X(U_k)) \le \epso$. Then, it is clear that $X(U_k)$ is an $\epso$-optimal 
solution of \eqnok{primal} and hence \eqnok{penalty-prob1}. We next discuss two first 
order methods, namely, the adaptive spectral projected gradient method and the adaptive 
Nesterov's smooth method for problems \eqnok{dual} and \eqnok{primal} (or equivalently, 
\eqnok{penalty-prob1}).

\subsubsection{Adaptive spectral gradient projection method}
\label{aspg}

In this subsection, we propose an adaptive spectral projected gradient (ASPG) method for 
solving problems \eqnok{dual} and \eqnok{primal} (or equivalently, \eqnok{penalty-prob1}). 

The spectral gradient projection (SPG) methods were developed by Birgin et al.\ 
\cite{BiMaRa00} for minimizing a smooth function over a closed convex set, which well  
integrate the nonmonotone line search technique proposed by Grippo et al.\ \cite{GrLaLu86} 
and Barzilai-Borwein's gradient method \cite{BarBor83} into classical projected gradient
methods (see \cite{Bert99}). We next discuss the one of them (namely, the SPG2 method 
\cite{BiMaRa00}) for solving the problem
\beq \label{gbeta-prob}
\min \ \{g_{\rho,\beta}(U): \ U\in\cU \},
\eeq
and its dual 
\beq \label{frho-prob}
\max \ \{f_{\rho}(X): \ \alpha_{\rho} I \preceq X \preceq \beta I\}
\eeq
for some $\beta \ge \alpha_{\rho}$, where 
\beq \label{grho-beta}
g_{\rho,\beta}(U) := \max\limits_{\arho I \preceq X  \preceq \beta I} \ \phi(X,U),
\eeq
$\cU$, $\phi(\cdot,\cdot)$, $f_{\rho}(\cdot)$ and $\arho$ are defined in \eqnok{cU}, 
\eqnok{phi}, \eqnok{primal} and Proposition \ref{soln-bdd}, respectively. We denote by 
$X_{\beta}(U)$ the unique optimal solution of problem \eqnok{grho-beta}. In view 
of \eqnok{phi}, it is not hard to observe that $g_{\rho,\beta}(U)$ is differentiable, and, 
moreover, $X_{\beta}(U)$ and $\nabla g_{\rho,\beta}(U)$ have closed-form expressions for 
any $U\in\cU$ (see (30) of \cite{Lu07}). In addition, since $\cU$ is a simple set, the 
projection of a point to $\cU$ can be cheaply carried out. Thus, the SPG method  
\cite{BiMaRa00} is suitable for solving problem \eqnok{gbeta-prob}. 

For ease of subsequent presentation, we now describe the SPG method  
\cite{BiMaRa00} for \eqnok{gbeta-prob} in details. The following notation will be used 
throughout this subsection.  

Given a sequence $\{U_k\}^{\infty}_{k=0} \subseteq \cU$ and an integer $M \ge 1$, we define  
\[
g^M_k := \max \ \{g_{\rho,\beta}(U_{k-j}): \ 0 \le j \le \min\{k, M-1\}\}.
\] 
Also, let $P_{\cU}: \Re^{n \times n} \to \cU$ be defined as 
\[
P_{\cU}(U) := \arg\min\{\|\hat U - U\|_F: \ \hat U \in \cU \}, \ \forall U \in \Re^{n \times n}. 
\]
 
\vgap
 
\noindent
\begin{minipage}[h]{6.6 in}
{\bf The SPG method for problems \eqnok{gbeta-prob} and \eqnok{frho-prob}:} \\ [5pt]
Let $\epso>0$, $\gamma \in (0,1)$, $0 < \sigma_1 < \sigma_2 < 1$ and $0 < \amin < 
\amax < \infty$ be given. Let $M \ge 1$ be an integer. Choose $U_0 \in \cU$, $\alpha_0 \in 
[\amin, \amax]$ and set $k=0$.

\begin{itemize}
\item[1)] If $g_{\rho,\hbeta}(U_k) - f_{\rho}(\xhbuk) \le \epso$, terminate.
\item[2)] Compute $d_k = P_{\cU}(U_k - \alpha_k \nabla g_{\rho,\beta}(U_k)) - U_k$. Set 
$\lambda \leftarrow 1$.
\bi
\item[2a)] Set $U_+ = U_k + \lambda d_k$.   
\item[2b)] If $g_{\rho,\beta}(U_+) \le g^M_k + \gamma \lambda \langle d_k, 
\nabla g_{\rho,\hbeta}(U_k) \rangle$, set $U_{k+1} = U_+$, $s_k = U_{k+1}-U_k$, 
$y_k = \nabla g_{\rho,\hbeta}(U_{k+1}) - \nabla g_{\rho,\hbeta}(U_k)$. Otherwise, choose 
$\lambda_+ \in [\sigma_1\lambda, \sigma_2\lambda]$, set $\lambda \leftarrow \lambda_+$ and 
go to step 2a).  
\item[2c)] Compute $b_k = \langle s_k, y_k \rangle$. If $b_k \le 0$, set $\alpha_{k+1} = 
\amax$. Otherwise, compute $a_k = \langle s_k, s_k \rangle$ and set 
$\alpha_{k+1} = \min\ \{\amax, \max\{\amin, a_k/b_k\}\}$.
\ei
\item[3)]
Set $k \leftarrow k+1$, and go to step 1). 
\end{itemize}
\noindent
{\bf end}
\end{minipage}

\gap
 
We next establish a convergence result for the SPG method for solving problems 
\eqnok{gbeta-prob} and \eqnok{frho-prob}.

\begin{theorem} \label{spg2-converg}
Let $\epso > 0$  be given. The SPG method generates a pair of $\epso$-optimal solutions 
$(U_k, X_{\beta}(U_k))$ to problems \eqnok{gbeta-prob} and \eqnok{frho-prob} in a finite 
number of iterations.
\end{theorem}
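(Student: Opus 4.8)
The plan is to invoke the standard global convergence theory for the SPG2 method of Birgin et al.\ \cite{BiMaRa00} applied to the objective $g_{\rho,\beta}(\cdot)$ over the convex set $\cU$, and then translate that convergence into the termination criterion in step 1). First I would check that the hypotheses of the SPG2 convergence theorem hold in our setting: the feasible set $\cU$ defined in \eqnok{cU} is nonempty, closed and convex (indeed compact); the function $g_{\rho,\beta}(\cdot)$ in \eqnok{grho-beta} is continuously differentiable with a Lipschitz-continuous gradient on a neighborhood of $\cU$, which follows from the closed-form expressions for $X_\beta(U)$ and $\nabla g_{\rho,\beta}(U)$ alluded to after \eqnok{grho-beta} (cf.\ (30) of \cite{Lu07}) together with the fact that $X$ ranges over the compact spectral interval $[\arho I, \beta I]$; and $g_{\rho,\beta}(\cdot)$ is bounded below on $\cU$ (it is continuous on the compact set $\cU$). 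Under these conditions, the SPG2 convergence theorem guarantees that every accumulation point of $\{U_k\}$ is a stationary point of \eqnok{gbeta-prob}, and more precisely that $\|P_{\cU}(U_k - \nabla g_{\rho,\beta}(U_k)) - U_k\| \to 0$; since $g_{\rho,\beta}$ is convex, stationary points are global minimizers, so $g_{\rho,\beta}(U_k) \to f^*_{\rho,\beta}$, the common optimal value of \eqnok{gbeta-prob} and its dual \eqnok{frho-prob}.

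Next I would connect this to the primal quantity $f_\rho(X_\beta(U_k))$. By the same weak-duality/saddle-point reasoning used in Proposition \ref{pd-soln} (whose proof mirrors Theorem 2.4 of \cite{Lu07}), applied now with the upper bound $\beta$ in place of $\brho$, one has $g_{\rho,\beta}(U) \ge f^*_{\rho,\beta} \ge f_\rho(X)$ for all feasible $U$ and $X$, and $g_{\rho,\beta}(U_k) - f_\rho(X_\beta(U_k)) \to 0$ as $g_{\rho,\beta}(U_k) \to f^*_{\rho,\beta}$. Wait—one subtlety here is that the termination test in step 1) is written with $g_{\rho,\hbeta}(U_k) - f_\rho(\xhbuk)$, i.e.\ with the primal feasible set $\alpha_\rho I \preceq X \preceq \beta I$ of \eqnok{frho-prob}; this is exactly the duality gap for the pair \eqnok{gbeta-prob}--\eqnok{frho-prob}, so the convergence $g_{\rho,\beta}(U_k) - f_\rho(X_\beta(U_k)) \to 0$ is precisely what is needed.

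Finally, since the nonnegative sequence $g_{\rho,\beta}(U_k) - f_\rho(X_\beta(U_k))$ converges to $0$, for the given $\epso > 0$ there is a finite index $k$ with $g_{\rho,\beta}(U_k) - f_\rho(X_\beta(U_k)) \le \epso$, at which point step 1) terminates; and by the weak-duality inequalities above this $U_k$ is $\epso$-optimal for \eqnok{gbeta-prob} and $X_\beta(U_k)$ is $\epso$-optimal for \eqnok{frho-prob}. I expect the main obstacle to be the first step—verifying cleanly that $\nabla g_{\rho,\beta}$ is Lipschitz on $\cU$ (so that the hypotheses of the SPG2 theorem are genuinely met), since this rests on differentiating the value function of the inner maximization in \eqnok{grho-beta}; but because the inner problem has a closed-form solution over a fixed compact spectral box, this is routine and can be cited from \cite{Lu07}. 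The remainder is a direct application of known convergence results plus the weak-duality bookkeeping already set up in Proposition \ref{pd-soln}.
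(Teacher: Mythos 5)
Your proposal is correct and follows essentially the same route as the paper: both invoke the Birgin et al.\ convergence theorem together with convexity of $g_{\rho,\beta}$ to get $g_{\rho,\beta}(U_k)\to f^*_{\rho,\beta}$, and then apply Proposition \ref{pd-soln} (with $\brho$ replaced by $\beta$) to drive the duality gap below $\epso$. The only cosmetic difference is that the paper phrases the argument as a proof by contradiction (assuming non-termination), whereas you argue directly; your extra care in verifying the hypotheses of the SPG2 theorem is a welcome addition but not a different approach.
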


\begin{proof}
Suppose by contradiction that the SPG method does not terminate. Then it generates a sequence 
$\{U_k\}^{\infty}_{k=0} \subseteq \cU$ satisfying $g_{\rho,\hbeta}(U_k) - f_{\rho}(\xhbuk) > 
\epso$. Note that $g_{\rho,\beta}(\cdot)$ is convex, which together with Theorem 2.4 of 
\cite{BiMaRa00} implies that any accumulation point of $\{U_k\}^{\infty}_{k=0}$ is an 
optimal solution of problem \eqnok{gbeta-prob}. By the continuity of $g_{\rho,\beta}(\cdot)$, 
it further implies that any accumulation point of $\{g_{\rho,\beta}(U_k)\}^{\infty}_{k=0}$ is 
the optimal value $f^*_{\rho}$ of \eqnok{gbeta-prob}. Using this observation and the fact that 
$\{g_{\rho,\beta}(U_k)\}^{\infty}_{k=0}$ is bounded, we conclude that $g_{\rho,\beta}(U_k) \to 
f^*_{\rho}$ as $k \to \infty$. Further, in view of Proposition \ref{pd-soln} by replacing 
$\brho$ with $\beta$, and $g_{\rho}(\cdot)$ with $g_{\rho,\beta}(\cdot)$, we have 
$g_{\rho,\hbeta}(U_k) - f_{\rho}(\xhbuk) \to 0$ as $k \to \infty$, and arrive at a contradiction. 
Therefore, the conclusion of this theorem holds.
\end{proof}

\vgap

Based on the above discussion, we see that the SPG method can be directly applied to find a 
pair of $\epso$-optimal solutions to problems \eqnok{dual} and \eqnok{primal} (or equivalently, 
\eqnok{penalty-prob1}) by setting $\beta = \beta_{\rho}$, where $\beta_{\rho}$ is given in 
Proposition \ref{soln-bdd}. It may converge, however, very slowly when $\beta_{\rho}$ is 
large. Indeed, similarly as in \cite{Lu07}, one can show that $\nabla g_{\rho,\beta}(U)$ is 
Lipschitz continuous on $\cU$ with constant $L = \beta^2 (\max\limits_{i,j} \rho_{ij})^2$ 
with respect to the Frobenius norm. Let $\alpha_k$, $b_k$ and $d_k$ be defined as above. 
Since $g_{\rho,\beta}(\cdot)$ is convex, we have $b_k \ge 0$. Actually, we observed that 
it is almost always positive. In addition, $\amin$ and $\amax$ are usually set to be $10^{-30}$ 
and $10^{30}$, respectively. Thus for the SPG method, we typically have  
\[
\alpha_{k+1} = \frac{\|U_{k+1}-U_k\|^2_F}{\langle U_{k+1}-U_k, \nabla g_{\rho,\hbeta}(U_{k+1})
-\nabla g_{\rho,\hbeta}(U_k) \rangle} \ \ge \ = \ \frac1L \ = \ \frac{1}{\beta^2 (\max\limits_{i,j} 
\rho_{ij})^2}.
\]  
Recall that $\beta_{\rho}$ is an upper bound of $\lmax(X^*_{\rho})$, and typically it is overly 
large, where $X^*_{\rho}$ is the optimal solution of \eqnok{penalty-prob1}. When 
$\beta=\beta_{\rho}$, we see from above that $\alpha_{k}$ can be very small and so is 
$U_{k+1}-U_k$ due to 
\[
\|U_{k+1}-U_k\|_F \ \le \ \|d_k\|_F \ = \ \|P_{\cU}(U_k - \alpha_k \nabla g_{\rho,\beta}(U_k)) - U_k\|_F 
\ \le \ \alpha_k \|\nabla g_{\rho,\beta}(U_k)\|_F. 
\]
Therefore, the SPG method may converge very slowly when applied to problem \eqnok{dual} directly.

To alleviate the aforementioned computational difficulty, we next propose an adaptive SPG (ASPG) 
method for problems \eqnok{dual} and \eqnok{primal} (or equivalently, \eqnok{penalty-prob1}) by 
solving a sequence of problems \eqnok{gbeta-prob} with $\beta = \beta_0$, $\beta_1$, $\ldots$, 
$\beta_m$ for some $\{\beta_k\}^m_{k=0}$ approaching $\lmax(X^*_{\rho})$ monotonically from below. 

\gap
 
\noindent
\begin{minipage}[h]{6.6 in}
{\bf  The adaptive SPG (ASPG) method for problems \eqnok{primal} and \eqnok{dual}:} \\ [5pt]
Let $\epso>0$, $\beta_0 \ll \beta_{\rho}$ and $r_{\beta} > 1$ be given. 
Choose $U_0 \in \cU$ and set $k=0$.
\begin{itemize}
\item[1)] Set $\beta \leftarrow \beta_k$. Apply the SPG method to find a pair of 
$\epso$-optimal solutions $(\hat U_k, X_{\beta}(\hat U_k))$ to problems \eqnok{gbeta-prob} 
and \eqnok{frho-prob} starting from $U_0$.
\item[2)] If $\beta=\beta_{\rho}$ or $\lmax(X_{\beta}(\hat U_k)) < \beta$, terminate. 
\item[3)] Set $U_0 \leftarrow \hat U_k$, $\beta_{k+1} = \min\{\beta r_{\beta}, \beta_{\rho}\}$, 
$k \leftarrow k+1$, and go to step 1).
\end{itemize}
\noindent
{\bf end}
\end{minipage}

\vgap

We now establish a convergence result for the ASPG method for solving problems 
\eqnok{dual} and \eqnok{primal} (or equivalently, \eqnok{penalty-prob1}).

\begin{theorem} \label{adapt-spg2}
Let $\epso >0$ be given. The ASPG method generates a pair of $\epso$-optimal 
solutions to problems \eqnok{dual} and \eqnok{primal} (or equivalently, \eqnok{penalty-prob1}) 
in a finite number of total (inner) iterations. 
\end{theorem}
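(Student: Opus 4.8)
The plan is to decompose the phrase ``a finite number of total (inner) iterations'' into two parts: (a) the outer loop on $\beta_k$ performs finitely many passes and each pass runs the SPG method for finitely many inner iterations, and (b) at whichever pass step~2) triggers, the returned pair $(\hat U_k, X_{\beta_k}(\hat U_k))$ is in fact a pair of $\epso$-optimal solutions of \eqnok{dual} and \eqnok{primal} (equivalently, \eqnok{penalty-prob1}). Part (a) is bookkeeping on top of Theorem~\ref{spg2-converg}; the substantive point is part (b), and in particular the case in which the method exits early because $\lmax(X_{\beta_k}(\hat U_k))<\beta_k$ with $\beta_k<\beta_\rho$.

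For (a): each pass calls the SPG method with the fixed bound $\beta=\beta_k$, and since $\arho \le \beta_0 \le \beta_k \le \beta_\rho$ the sub-problem \eqnok{gbeta-prob} is well posed, so Theorem~\ref{spg2-converg} guarantees termination with $(\hat U_k, X_{\beta_k}(\hat U_k))$ after finitely many inner iterations. If step~2) does not stop, then $\beta_{k+1}=\min\{\beta_k r_\beta,\beta_\rho\}$ with $r_\beta>1$, whence $\beta_k=\min\{\beta_0 r_\beta^{\,k},\beta_\rho\}$ and $\beta_k=\beta_\rho$ once $k\ge\lceil\log_{r_\beta}(\beta_\rho/\beta_0)\rceil$; at that $k$ the first clause of the step~2) test holds and the method stops. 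Thus there are finitely many outer passes, each with finitely many inner iterations, so the total count is finite.

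For (b) I would distinguish the two exit conditions. If the method exits with $\beta_k=\beta_\rho$, then by Proposition~\ref{soln-bdd} problem \eqnok{penalty-prob1} coincides with \eqnok{penalty-prob2}, i.e.\ with \eqnok{primal}, and \eqnok{gbeta-prob} and \eqnok{frho-prob} with $\beta=\beta_\rho$ are literally \eqnok{dual} and \eqnok{primal}, so the SPG output is the desired pair by construction. If instead the method exits because $\lmax(X_{\beta_k}(\hat U_k))<\beta_k$ with $\beta_k<\beta_\rho$, then at $X_{\beta_k}(\hat U_k)$ the upper constraint $\beta_k I-X\succeq 0$ is inactive, so the optimality conditions for $\max\{\phi(X,\hat U_k):\arho I\preceq X\preceq\beta_k I\}$ are identical to those for $\max\{\phi(X,\hat U_k):\arho I\preceq X\preceq\beta_\rho I\}$ (the common lower constraint enters the same way, and the upper constraint with $\beta_\rho$ is also inactive there); since $\phi(\cdot,\hat U_k)$ is strictly concave, this forces $X_{\beta_k}(\hat U_k)=X_{\beta_\rho}(\hat U_k)$, hence $g_{\rho,\beta_k}(\hat U_k)=\phi(X_{\beta_k}(\hat U_k),\hat U_k)=g_{\rho,\beta_\rho}(\hat U_k)=g_\rho(\hat U_k)$. (The same conclusion is visible from the closed form of $X_\beta(U)$ as the eigenvalue clipping of $(\Sigma+\rho\ast U)^{-1}$ onto $[\arho,\beta]$, cf.\ (30) of \cite{Lu07}: $\lmax<\beta_k$ means nothing is clipped from above, so enlarging the bound to $\beta_\rho$ changes nothing.) In either case, feasibility $\arho I\preceq X_{\beta_k}(\hat U_k)\preceq\beta_k I\preceq\beta_\rho I$ and $\hat U_k\in\cU$ hold, the inner stopping rule gives $g_\rho(\hat U_k)-f_\rho(X_{\beta_k}(\hat U_k))=g_{\rho,\beta_k}(\hat U_k)-f_\rho(X_{\beta_k}(\hat U_k))\le\epso$, and weak duality $g_\rho(\hat U_k)\ge f^*_\rho\ge f_\rho(X_{\beta_k}(\hat U_k))$ then yields $g_\rho(\hat U_k)-f^*_\rho\le\epso$ and $f^*_\rho-f_\rho(X_{\beta_k}(\hat U_k))\le\epso$, i.e.\ the pair is $\epso$-optimal for \eqnok{dual} and \eqnok{primal}.

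The main obstacle is precisely this early-exit case: one must be sure that an $\epso$-optimal pair computed for the shrunken feasible set $\{\arho I\preceq X\preceq\beta_k I\}$ remains admissible and $\epso$-optimal for the true set $\{\arho I\preceq X\preceq\beta_\rho I\}$. The argument above — that an inactive upper eigenvalue bound can be relaxed all the way to $\beta_\rho$ without moving the maximizer $X_{\beta_k}(\hat U_k)$ or changing the value $g_{\rho,\beta_k}(\hat U_k)$ — is the crux that makes the adaptive scheme legitimate; everything else is either Theorem~\ref{spg2-converg} or routine duality estimates.
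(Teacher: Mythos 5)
Your proposal is correct and follows essentially the same route as the paper's proof: finitely many $\beta$-updates times finitely many SPG iterations (Theorem \ref{spg2-converg}) gives finite total work, and at termination either $\beta=\brho$ or the inactive upper eigenvalue bound gives $g_{\rho,\beta}(\hat U_k)=g_{\rho}(\hat U_k)$, so the $\epso$-gap transfers to \eqnok{dual} and \eqnok{primal}. The only difference is that you spell out, via strict concavity (or the eigenvalue-clipping closed form), why the inactive bound forces $X_{\beta_k}(\hat U_k)=X_{\brho}(\hat U_k)$, a step the paper asserts directly from the definitions of $g_{\rho}$ and $g_{\rho,\beta}$.
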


\begin{proof}
First, we clearly see that $\beta$ is updated for only a finite number of times. Using 
this observation and Theorem \ref{spg2-converg}, we conclude that the ASPG method 
terminates in a finite number of total (inner) iterations. Now, suppose that it terminates 
at $\beta=\beta_k$ for some $k$. We claim that $(\hat U_k, X_{\beta}(\hat U_k))$ is 
a pair of $\epso$-optimal solutions to problems \eqnok{dual} and \eqnok{primal} 
(or equivalently, \eqnok{penalty-prob1}). Indeed, we clearly have $\beta=\beta_{\rho}$ or 
$\lmax(X_{\beta}(\hat U_k)) < \beta$, which together with the definition of $g_{\rho}(\cdot)$ 
and $g_{\rho,\beta}(\cdot)$ (see \eqnok{dual} and \eqnok{gbeta-prob}), implies that 
$g_{\rho}(\hat U_k) = g_{\rho,\beta}(\hat U_k)$. Thus, we obtain that  
\[
g_{\rho}(\hat U_k) - f_{\rho}(X_{\beta}(\hat U_k)) \ = \ g_{\rho,\beta}(\hat U_k) - 
f_{\rho}(X_{\beta}(\hat U_k)) \ \le \ \epso,
\]
which along with the fact $X_{\beta}(\hat U_k) \in \cX_{\rho}$, implies that 
$(\hat U_k, X_{\beta}(\hat U_k))$ is a pair of $\epso$-optimal solutions to problems 
\eqnok{dual} and \eqnok{primal}.
\end{proof}

\gap

As discussed above, the ASPG method is able to find a pair of $\epso$-optimal solutions 
to problems \eqnok{penalty-prob1} and \eqnok{dual}. We now show how this method can be 
extended to find an $(\epso,\eps_c)$-optimal solution to problem \eqnok{gsparcov}. Recall 
from the framework GSICS (see Subsection \ref{framework}) that in order to obtain an 
$(\epso,\eps_c)$-optimal solution to problem \eqnok{gsparcov}, we need to find an 
$\epso$-optimal solution of problem \eqnok{penalty-prob1} for a sequence of penalty 
parameters $\{\rho^k\}^m_{k=1}$, which satisfy for $k=1,\ldots,m$, $\rho^k_{ij}=\rho_{ij}$, 
$\forall (i,j)\not\in\Omega$ and $\rho^k_{ij} = \rho^0_{ij} r^{k-1}_{\rho}$, $\forall (i,j)
\in\Omega$ for some $r_{\rho}>1$ and $\rho^0_{ij}>0$, $\forall (i,j)\in\Omega$. Suppose 
that a pair of $\epso$-optimal solutions $(X_{\beta_k}(\hat U_k)), \hat U_k)$ of problems 
\eqnok{penalty-prob1} and \eqnok{dual} with $\rho=\rho^k$ are already found by the ASPG 
method for some $\beta_k \in [\alpha_{\rho^k}, \beta_{\rho^k}]$. Then, we choose the 
initial $U_0$ and $\beta_0$ for the ASPG method when applied to solve problems 
\eqnok{penalty-prob1} and \eqnok{dual} with $\rho=\rho^{k+1}$ as follows: 
\beq \label{initial-pt}
(U_0)_{ij} = \left\{\ba{ll} 
(\hat U_k)_{ij}/r_{\rho}, & \ \mbox{if} \ (i,j) \in \Omega; \\
(\hat U_k)_{ij}, & \ \mbox{otherwise}.
\ea\right., \ \ \ \ \ \ \beta_0 = \max\left\{\alpha_{\rho^{k+1}},\lmax(X_{\beta_k}
(\hat U_k))\right\}.
\eeq  
We next provide some interpretation on such a choice of $U_0$ and $\beta_0$. Since 
$\hat U^k \in \cU$ and $r_{\rho} >1$, we easily see that $U_0 \in \cU$. In addition, 
using the definition of $\beta_{\rho}$ (see Proposition \ref{soln-bdd}) and the fact 
that $\Diag(\rho^{k+1})= \Diag(\rho^k)$, we observe that $\beta_{\rho^{k+1}}=\beta_{\rho^k}$, 
and hence $\beta_0 \in [\alpha_{\rho^{k+1}},\beta_{\rho^{k+1}}]$. Let $f^*_{\rho}$ denote 
the optimal value of problem \eqnok{penalty-prob1} for any given $\rho$. Clearly, we can 
observe from the ASPG method that either $\lmax(X_{\beta_k}(\hat U_k)) < \beta_k < 
\beta_{\rho^k}$ or $\lmax(X_{\beta_k}(\hat U_k) \le  \beta_k = \beta_{\rho^k}$ holds, which 
together with \eqnok{dual} and \eqnok{grho-beta} implies that  
\beq \label{optval-ineq}
g_{\rho^k, \beta_k}(\hat U_k)  = g_{\rho^k}(\hat U_k) \in [f^*_{\rho^k}, f^*_{\rho^k} + \epso].
\eeq
Typically, $\lmax(X_{\beta_k}(\hat U_k) \gg \alpha_{\rho^{k+1}}$, and hence 
$\beta_0 = \lmax(X_{\beta_k}(\hat U_k)) \le \beta_k$ generally holds. Also usually, 
$\alpha_{\rho^{k+1}} \approx \alpha_{\rho^k} \approx 0$. Using these relations along with 
\eqnok{optval-ineq}, \eqnok{dual} and \eqnok{grho-beta}, we further observe that 
\[
\ba{lcllcllcllcl}
f^*_{\rho^{k+1}} &\le & g_{\rho^{k+1}}(U_0) & \approx & g_{\rho^k} (\hat U_k) &\le & 
f^*_{\rho^k} + \epso, & &  \\ [6pt]
f^*_{\rho^{k+1}} &\le & g_{\rho^{k+1},\beta_0}(U_0) & \approx & g_{\rho^{k},\beta_0}(U_0) 
& \le & g_{\rho^k,\beta_k} (\hat U_k) & \le & f^*_{\rho^k} + \epso.
\ea
\]  
It follows that when $f^*_{\rho^{k+1}}$ is close to $f^*_{\rho^k}$, $U_0$ is nearly an
$\epso$-optimal solution for problems \eqnok{dual} and $\eqnok{grho-beta}$ with 
$\rho=\rho^{k+1}$ and $\beta=\beta_0$. Therefore, we expect that for the above choice 
of $U_0$ and $\beta_0$, the ASPG method can solve problems \eqnok{penalty-prob1} and 
\eqnok{dual} with $\rho=\rho^{k+1}$ rapidly when $\rho^{k+1}$ is close to $\rho^{k}$.

\subsubsection{Adaptive Nesterov's smooth method}
\label{ans}

In this subsection, we propose an adaptive Nesterov's smooth (ANS) method for solving 
problems \eqnok{dual} and \eqnok{primal} (or equivalently, \eqnok{penalty-prob1}).

Recently, Lu \cite{Lu07} studied Nesterov's smooth method \cite{Nest83-1,Nest05-1} 
for solving a special class of problems \eqnok{dual} and \eqnok{primal} 
(or equivalently, \eqnok{penalty-prob1}), where $\rho$ is a positive multiple of $ee^T$. 
He showed that an $\epso$-optimal solution to problems \eqnok{dual} and \eqnok{primal} 
can be found in at most $\sqrt{2} \beta_{\rho} (\max\limits_{i,j} \rho_{ij}) 
\max\limits_{U\in\cU}\|U-U_0\|_F/\sqrt{\epso}$ iterations by Nesterov's smooth method 
for some initial point $U_0\in\cU$ (see pp.\ 12 of \cite{Lu07} for details). Given that 
$\brho$ is an estimate and typically an overestimate of $\lmax(X^*_{\rho})$, 
where $X^*_{\rho}$ is the unique optimal solution of problem \eqnok{penalty-prob1}, the 
aforementioned iteration complexity can be exceedingly large and Nesterov's smooth 
method generally converges extremely slowly. Lu \cite{Lu07} further proposed an 
adaptive Nesterov's smooth (ANS) method for solving problems \eqnok{dual} and 
\eqnok{primal} (see pp.\ 15 of \cite{Lu07}). In his method, $\lmax(X^*_{\rho})$ is 
estimated by $\lmax(X(U_k))$ and adaptively adjusted based on the change of 
$\lmax(X(U_k))$ as the algorithm progresses, where $U_k$ is an approximate solution 
of problem \eqnok{dual}. As a result, his method can provide an asymptotically tight 
estimate of $\lmax(X^*_{\rho})$ and it has an asymptotically optimal iteration complexity.

We now extend the ANS method \cite{Lu07} to problems \eqnok{dual} and \eqnok{primal} 
(or equivalently, \eqnok{penalty-prob1}) with a general $\rho$. Recall from Subsection 
\ref{aspg} that $\nabla g_{\rho}(U)$ is Lipschitz continuous on $\cU$ with constant 
$L = \beta^2_{\rho} (\max\limits_{i,j} \rho_{ij})^2$ with respect to the Frobenius 
norm. Then it is straightforward to extend the ANS method \cite{Lu07} to problems 
\eqnok{dual} and \eqnok{penalty-prob1} for a general $\rho$ by replacing the corresponding 
Lipschitz constants by the ones computed according to the above formula. For ease of 
reference, we provide the details of the ANS method for problems \eqnok{dual} and 
\eqnok{primal} (or equivalently, \eqnok{penalty-prob1}) below.

Throughout the remainder of this section, we assume that $\arho$, $\brho$, $g_{\rho,\beta}
(\cdot)$ and $X_{\beta}(\cdot)$ are given in Proposition \ref{soln-bdd} and Subsection 
\ref{aspg}, respectively. We now introduce a definition that will be used subsequently.
 
\begin{definition} 
Given any $U\in \cU$ and $\hbeta \in [\arho,\brho]$, $X_{\hbeta}(U)$ is called ``active'' 
if $\lambda_{\max}(X_{\hbeta}(U))=\hbeta$ and $\hbeta < \brho$; otherwise it is 
called ``inactive''.
\end{definition}

We are now ready to present the ANS method \cite{Lu07} for problems \eqnok{dual} and 
\eqnok{primal}.

\gap

\noindent
\begin{minipage}[h]{6.6 in}
{\bf The ANS method for problems \eqnok{primal} and \eqnok{dual}} \\ [5pt]
Let $\epsilon > 0$, $\varsigma_1$, $\varsigma_2 > 1$, and let $\varsigma_3 \in (0,1)$ be 
given. Let $\rho_{\max} = \max\limits_{i,j} \rho_{ij}$. Choose $U_0 \in \cU$ and 
$\hbeta \in [\arho, \brho]$. Set $L=\hbeta^2 \rho^2_{\max}$, $\sigma=1$, and $k=0$.
\begin{itemize}
\item[1)] Compute $\xhbuk$.
\bi
\item[1a)] 
If $\xhbuk$ is active, find the smallest $s\in \cZ_{+}$ such that $X_{\bbeta}(U_k)$ is 
inactive, where $\bbeta=\min\{\varsigma_1^s \hbeta, \brho\}$. Set $k=0$, $U_0 = U_k$, 
$\hbeta = \bbeta$, $L=\hbeta^2 \rho^2_{\max}$ and go to step 2). 
\item[1b)]
If $\xhbuk$ is inactive and $\lmax(\xhbuk) \le \varsigma_3 \hbeta$, set $k=0$, $U_0 = U_k$, 
\\ $\hbeta=\max\{\min\{\varsigma_2 \lmax(\xhbuk),\brho\},\arho\}$, and 
$L=\hbeta^2\rho^2_{\max}$. 
\ei
\item[2)] If $g_{\rho,\hbeta}(U_k) - f_{\rho}(\xhbuk) \le \epsilon$, terminate.
Otherwise, compute $\nabla g_{\rho,\hbeta}(U_k)$.
\item[3)]
Find $U^{sd}_{k} = \argmin \left \{ \langle \nabla g_{\rho,\hbeta}(U_k), U-U_k \rangle +
\frac{L}2 \, \|U-U_k\|^2_F: \ U \in \cU \right \}$.
\item[4)]
Find $U^{ag}_{k} = \argmin \left \{ \frac{L}{2\sigma}\|U-U_0\|^2_F+\sum\limits_{i=0}^k
\frac{i+1}2 [g_{\rho,\hbeta}(U_i)+ \langle \nabla g_{\rho,\hbeta}(U_i), U-U_i \rangle]: 
\ U \in \cU \right \}$. 
\item[5)]
Set $U_{k+1} = \frac{2}{k+3} U^{ag}_{k} + \frac{k+1}{k+3} U^{sd}_{k}$.
\item[6)]
Set $k \leftarrow k+1$, and go to step 1). 
\end{itemize}
\noindent
{\bf end}
\end{minipage}

\vgap


Similarly as the ASPG method, we can easily extend the ANS method to find an 
$(\epso,\eps_c)$-optimal solution to problem \eqnok{gsparcov} by applying the same strategy 
for updating the initial $U_0$ and $\beta_0$ detailed at the end of Subsection \ref{aspg}. 
For convenience of presentation, the resulting method is referred to as the adaptive Nesterov's 
smooth (ANS) method. 

 
\section{Computational results}
\label{comp}

In this section, we test the sparse recovery ability of the model \eqnok{gsparcov} and 
compare the performance of the adaptive spectral projected gradient (ASPG) method and the 
adaptive Nesterov's smooth (ANS) method that are proposed in Section \ref{methods} for 
solving problem \eqnok{gsparcov} on a set of randomly generated instances. 

All instances used in this section were randomly generated in a similar manner as 
described in d'Aspremont et al.\ \cite{DaOnEl06} and Lu \cite{Lu07}. Indeed, we first 
generate a sparse matrix $A\in \cS^n_{++}$, and then we generate a matrix $B\in\cS^n$ by
\[
B = A^{-1} + \tau V,
\]      
where $V\in\cS^n$ contains pseudo-random values drawn from a uniform distribution on 
the interval $[-1,1]$, and $\tau$ is a small positive number. Finally, we obtain 
the following randomly generated sample covariance matrix:
\[
\Sigma = B - \min\{\lambda_{\min}(B)- \vartheta, 0\} I,   
\]
where $\vartheta$ is a small positive number. In particular, we set $\tau=0.15$, 
$\vartheta=1.0e-4$ for generating all instances.    
 

In the first experiment we compare the performance of the ASPG and ANS methods for 
problem \eqnok{gsparcov}. For this purpose, we first randomly generate the above 
matrix $A\in \cS^n_{++}$ with a density prescribed by $\varrho$, and set $\Omega = \{(i,j): 
\ A_{ij}=0, |i-j| \ge 2\}$ and $\rho_{ij}=0.5$ for all $(i,j)\notin\Omega$. $\Sigma$ is then 
generated by the above approach. The codes for both methods are written in 
MATLAB. In particular, we set $\gamma = 10^{-4}$, $M=8$, $\sigma_1=0.1$, $\sigma_2=0.9$, 
$\alpha_{\min}=10^{-15}$, $\alpha_{\max}=10^{15}$ for the ASPG method, and set 
$\varsigma_1= \varsigma_2 = 1.05$ and $\varsigma_3=0.95$ for the ANS method. 
In addition, for both methods we set $\beta_0=1$, $r_{\beta}=10$, $r_{\rho}=2$, and 
$\rho^0_{ij} = 0.5$ for all $(i,j) \in \Omega$. Also, the ASPG and ANS methods start 
from the initial point $U_0=0$ and terminate once an $(\epso,\eps_c)$-optimal solution 
of problem \eqnok{gsparcov} is found, where $\epso=0.1$ and $\eps_c=10^{-4}$. All 
computations are performed on an Intel Xeon 2.66 GHz machine with Red Hat Linux version 8.

The performance of the ASPG and ANS methods for the randomly generated instances with 
density $\varrho=0.1$, $0.5$ and $0.9$ is presented in Tables \ref{result-1}-\ref{result-3}, 
respectively. The row size $n$ of each sample covariance matrix  $\Sigma$ is given in column 
one. The size of the set $\Omega$ is given in column two. The numbers of (inner) iterations 
of ASPG and ANS are given in columns three to four, the number of function evaluations 
are given in columns five to six, and the CPU times (in seconds) are given in the last two 
columns, respectively. 
\begin{table}[t]
\caption{Comparison of ASPG and ANS for $\varrho=0.1$}
\centering
\label{result-1}
\begin{tabular}{|rr||rr||rr||rr|}
\hline 
\multicolumn{2}{|c||}{Problem} & \multicolumn{2}{c||}{Iter} &  
\multicolumn{2}{c||}{Nf} & \multicolumn{2}{c|}{Time} \\
\multicolumn{1}{|c}{n} & \multicolumn{1}{c||}{size($\Omega$)} & 
\multicolumn{1}{c}{\sc ans} & \multicolumn{1}{c||}{\sc aspg} & 
\multicolumn{1}{c}{\sc ans} & \multicolumn{1}{c||}{\sc aspg} & 
\multicolumn{1}{c}{\sc ans} & \multicolumn{1}{c|}{\sc aspg} \\
\hline
100 & 8792 & 1298 & 1736 & 1298 & 2626  & 17.9 & 33.9 \\
200 & 35646 & 593 & 489 & 593 & 654 & 52.1 & 56.1 \\
300 & 80604 & 1411 & 683 & 1411 & 974 & 431.8 & 291.7 \\
400 & 143636 & 1400 & 702 & 1400 & 978 & 1053.8 & 730.4 \\
500 & 224788 & 1012 & 615 & 1012 & 863 & 1469.4 & 1244.8 \\
600 & 324072 & 1410 & 661 & 1410 & 908 & 3501.2 & 2220.5 \\
700 & 441380 & 1189 & 738 & 1189 & 1050 & 4656.0 & 4070.5 \\
800 & 576896 & 1175 & 811 & 1175 & 1169 & 6601.2 & 6500.5 \\
900 & 730500 & 1660 & 808 & 1660 & 1154 & 12975.7 & 8964.5 \\
1000 & 902124 & 2600 & 1285 & 2600 & 1903 & 27523.2 & 20059.9 \\
\hline
\end{tabular}
\end{table}
\begin{table}[t]
\caption{Comparison of ASPG and ANS for $\varrho=0.5$}
\centering
\label{result-2}
\begin{tabular}{|rr||rr||rr||rr|}
\hline 
\multicolumn{2}{|c||}{Problem} & \multicolumn{2}{c||}{Iter} &  
\multicolumn{2}{c||}{Nf} & \multicolumn{2}{c|}{Time} \\
\multicolumn{1}{|c}{n} & \multicolumn{1}{c||}{size($\Omega$)} & 
\multicolumn{1}{c}{\sc ans} & \multicolumn{1}{c||}{\sc aspg} & 
\multicolumn{1}{c}{\sc ans} & \multicolumn{1}{c||}{\sc aspg} & 
\multicolumn{1}{c}{\sc ans} & \multicolumn{1}{c|}{\sc aspg} \\
\hline
100 & 4776 & 256 & 112 & 256 & 146  & 3.9 & 2.3 \\
200 & 19438 & 453 & 178 & 453 & 229 & 40.3 & 20.1 \\
300 & 44136 & 412 & 229 & 412 & 296 & 128.4 & 91.4 \\
400 & 78738 & 433 & 250 & 433 & 339 & 335.1 & 260.2 \\
500 & 123300 & 499 & 313 & 499 & 417 & 727.2 & 605.8 \\
600 & 177614 & 535 & 354 & 535 & 494 & 1361.0 & 1247.1 \\
700 & 241944 & 569 & 327 & 569 & 467 & 2204.8 & 1793.9 \\
800 & 317184 & 536 & 349 & 536 & 498 & 3011.7 & 2763.5 \\
900 & 400952 & 581 & 420 & 581 & 600 & 4619.5 & 4752.2 \\
1000 & 494610 & 697 & 561 & 697 & 775 & 7425.6 & 8240.1 \\
\hline
\end{tabular}
\end{table}
\begin{table}[t]
\caption{Comparison of ASPG and ANS for $\varrho=0.9$}
\centering
\label{result-3}
\begin{tabular}{|rr||rr||rr||rr|}
\hline 
\multicolumn{2}{|c||}{Problem} & \multicolumn{2}{c||}{Iter} &  
\multicolumn{2}{c||}{Nf} & \multicolumn{2}{c|}{Time} \\
\multicolumn{1}{|c}{n} & \multicolumn{1}{c||}{size($\Omega$)} & 
\multicolumn{1}{c}{\sc ans} & \multicolumn{1}{c||}{\sc aspg} & 
\multicolumn{1}{c}{\sc ans} & \multicolumn{1}{c||}{\sc aspg} & 
\multicolumn{1}{c}{\sc ans} & \multicolumn{1}{c|}{\sc aspg} \\
\hline
100 & 960 & 207 & 85 & 207 & 164  & 3.3 & 2.5 \\
200 & 3738 & 275 & 139 & 275 & 180 & 24.5 & 16.0 \\
300 & 8750 & 567 & 178 & 567 & 220 & 173.4 & 69.7 \\
400 & 15764 & 408 & 180 & 408 & 235 & 318.6 & 182.7 \\
500 & 25072 & 416 & 272 & 416 & 367 & 616.8 & 535.2 \\
600 & 35846 & 441 & 275 & 441 & 371 & 1107.0 & 920.7 \\
700 & 48718 & 1219 & 421 & 1219 & 597 & 4646.2 & 2300.0 \\
800 & 63814 & 461 & 348 & 461 & 460 & 2693.9 & 2650.0 \\
900 & 80798 & 469 & 363 & 469 & 507 & 4124.1 & 4171.8 \\
1000 & 98870 & 495 & 363 & 495 & 514 & 5656.1 & 5718.9 \\
\hline
\end{tabular}
\end{table}
From Tables \ref{result-1}-\ref{result-3}, we see that both methods are able to solve 
all instances within a reasonable amount of time. In addition, the ASPG method, namely, 
the adaptive spectral gradient method, generally outperforms the ANS method, that is, 
the adaptive Nesterov's smooth method.    

Our second experiment is similar to the one carried out in d'Aspremont et al.\ \cite{DaOnEl06}.  
We intend to test the sparse recovery ability of the model \eqnok{gsparcov}. To this aim, 
we specialize $n=30$ and the matrix $A\in \cS^n_{++}$ to be the one with diagonal 
entries around one and a few randomly chosen, nonzero off-diagonal entries equal to 
$+1$ or $-1$ and the sample covariance matrix $\Sigma$ is then generated by the aforementioned 
approach. Also, we set $\Omega = \{(i,j): \ A_{ij}=0, |i-j| \ge 5\}$ and $\rho_{ij}=0.1$ for all 
$(i,j)\notin\Omega$. The model \eqnok{gsparcov} with such an instance is finally solved by the 
ASPG method whose parameters, initial point and termination criterion are exactly same as above. 
In Figure \ref{fig}, we plot the sparsity patterns of the original inverse covariance matrix $A$, 
the approximate solution to problem \eqnok{gsparcov} and the noisy inverse covariance matrix 
$B^{-1}$ for such a randomly generated instance. We observe that the model \eqnok{gsparcov} 
is capable of recovering the sparsity pattern of the original inverse covariance matrix.  
\begin{figure}
\vspace{-1.5in} 
\centering
\includegraphics[scale=1.0]{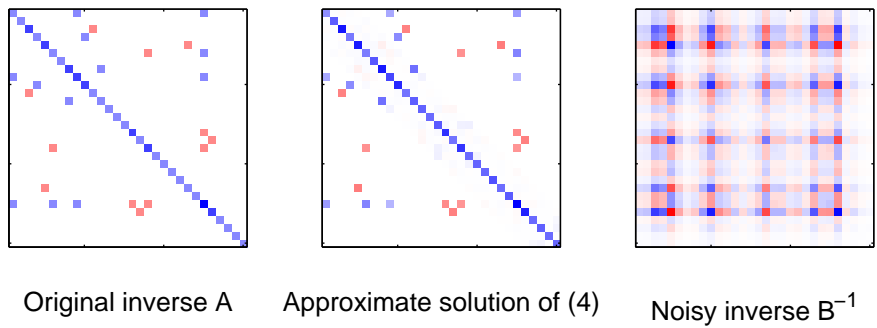} 
\vspace{-1.3in} 
\caption{Sparsity recovery.} \label{fig}
\end{figure}
    
\section{Concluding remarks}
\label{concl-remark}    

In this paper, we considered estimating sparse inverse covariance of a Gaussian 
graphical model whose conditional independence is assumed to be partially known. 
Naturally, we formulated it as a constrained $l_1$-norm penalized maximum likelihood 
estimation problem. Further, we proposed an algorithm framework, and developed two 
first-order methods, that is, adaptive spectral projected gradient (ASPG) 
method and adaptive Nesterov's smooth (ANS) method, for solving it. Our 
computational results demonstrate that both methods are able to solve problems of 
size at least a thousand and number of constraints of nearly a half million within 
a reasonable amount of time, and the ASPG method generally outperforms the ANS method. 


The source codes for the ASPG and ANS methods (written in MATLAB) are available 
online at www.math.sfu.ca/$\sim$zhaosong. They can also be applied to problem 
\eqnok{gsparcov} with $\Omega=\emptyset$, namely, the case where the underlying 
sparsity structure is completely unknown. It shall be mentioned that these codes 
can be extended straightforwardly to more general problems of the form
\[
\begin{array}{ll}
\max\limits_X & \log\det X - \langle \Sigma, X \rangle - \sum\limits_{(ij)\notin 
\Omega} \rho_{ij} |X_{ij}|  \\ 
\mbox{s.t.} & \alpha I \preceq X \preceq \beta I, \\ [4pt]
            & X_{ij} = 0, \ \forall (i,j) \in \Omega,  
\end{array}   
\]
where $0 \le \alpha < \beta \le \infty$ are some fixed bounds on the eigenvalues 
of the solution.


\begin{thebibliography}{10}

\bibitem{BarBor83} 
{\sc J.~Barzilai and J.~M.~Borwein}, {\em Two point step size gradient 
methods}, IMA J. Numer. Anal., 8 (1988), pp.~141--148.

\bibitem{Bert99}
{\sc D.~P.~Bertsekas}, {\em Nonlinear Programming}, 2nd edition, Athena 
Scientific, Belmont, Massachusetts, 1999. 

\bibitem{BiMaRa00}
{\sc E.~G.~Birgin, J.~M.~Mart\'{i}nez, and M.~Raydan}, {\em Nonmonotone 
spectral projected gradient methods on convex sets}, SIAM J. Optim., 
10 (2000), pp.~1196--1211.

\bibitem{DaVaRo06}
{\sc J.~Dahl, L.~Vandenberghe, and V.~Roychowdhury}, {\em Covariance 
selection for non-chordal graphs via chordal embedding}, Optim. Methods 
Softw., 23 (2008), pp.~501--520.

\bibitem{DaOnEl06}
{\sc A.~{d}'Aspremont, O.~Banerjee, and L.~{El Ghaoui}}, 
{\em First-order methods for sparse covariance selection}, SIAM J. Matrix 
Anal. Appl., 30 (2008), pp.~56--66.

\bibitem{FriHasTib07}
{\sc J.~Friedman, T.~Hastie, and R.~Tibshirani}, {\em Sparse inverse 
covariance estimation with the graphical lasso},  Biostatistics, 9 (2008), 
pp.~432--441.

\bibitem{GrLaLu86} 
{\sc L.~Grippo, F.~Lampariello, and S.~Lucidi}, 
{\em A nonmonotone line search technique for Newton's method}, 
SIAM J. Numer. Anal., 23 (1986), pp.~707--716.

\bibitem{Lu07}
{\sc Z.~Lu}, {\em Smooth optimization approach for sparse covariance selection}, 
SIAM J. Optim., to appear.

\bibitem{Nest83-1}
{\sc Y.~E. Nesterov}, {\em A method for unconstrained convex minimization 
problem with the rate of convergence {$O(1/k^2)$}}, Doklady AN SSSR, 269 
(1983), pp.~543--547, translated as Soviet Math. Docl.

\bibitem{Nest05-1}
{\sc Y.~E. Nesterov}, {\em Smooth minimization of nonsmooth functions}, 
Math. Programming, 103 (2005), pp.~127--152.

\bibitem{NeNe94}
{\sc Y.~E. Nesterov and A.~S. Nemirovski}, {\em Interior point Polynomial 
algorithms in Convex Programming: Theory and Applications}, SIAM, 
Philadelphia, 1994.

\bibitem{VaBoWu98}
{\sc L.~Vandenberghe, S.~Boyd, and S.~Wu}, {\em Determinant maximization 
with linear matrix inequality constraints}, SIAM J. Matrix Anal. Appl., 19 
(1998), pp.~499--533.

\bibitem{YuLi07-1}
{\sc M.~Yuan and Y.~Lin}, {\em Model selection and estimation in the 
{G}aussian graphical model}, Biometrika, 94 (2007), pp.~19--35.

\end{thebibliography}
\end{document}